\documentclass[prx,reprint,floatfix,letterpaper,twocolumn,nofootinbib,showpacs,longbibliography,superscriptaddress]{revtex4-1}
\usepackage[caption=false]{subfig}
\usepackage{graphicx} 
\usepackage{epstopdf}
\usepackage{array}
\usepackage{verbatim}
\usepackage{amsmath,amsfonts,amssymb,amscd}
\usepackage{amsthm}
\usepackage{tabularx}
\usepackage{stmaryrd}
\usepackage{enumerate}
\usepackage[ruled]{algorithm2e}
\usepackage{enumitem}
\usepackage{wasysym}
\usepackage[dvipsnames]{xcolor}
\usepackage[normalem]{ulem}

%
%
%


\usepackage{xy}
\xyoption{matrix}
\xyoption{frame}
\xyoption{arrow}
\xyoption{arc}

\usepackage{ifpdf}
\ifpdf
\else
\PackageWarningNoLine{Qcircuit}{Qcircuit is loading in Postscript mode.  The Xy-pic options ps and dvips will be loaded.  If you wish to use other Postscript drivers for Xy-pic, you must modify the code in Qcircuit.tex}
\xyoption{ps}
\xyoption{dvips}
\fi

\entrymodifiers={!C\entrybox}

\newcommand{\bra}[1]{{\left\langle{#1}\right\vert}}
\newcommand{\ket}[1]{{\left\vert{#1}\right\rangle}}
\newcommand{\qw}[1][-1]{\ar @{-} [0,#1]}
\newcommand{\qwx}[1][-1]{\ar @{-} [#1,0]}









\newcommand{\control}{*!<0em,.025em>-=-<.2em>{\bullet}}

\newcommand{\ctrl}[1]{\control \qwx[#1] \qw}

\newcommand{\targ}{*+<.02em,.02em>{\xy ="i","i"-<.39em,0em>;"i"+<.39em,0em> **\dir{-}, "i"-<0em,.39em>;"i"+<0em,.39em> **\dir{-},"i"*\xycircle<.4em>{} \endxy} \qw}

\newcommand{\rstick}[1]{*!L!<-.5em,0em>=<0em>{#1}}
\newcommand{\lstick}[1]{*!R!<.5em,0em>=<0em>{#1}}


\newcommand{\Qcircuit}{\xymatrix @*=<0em>}



\usepackage{hyperref}
\hypersetup{
    bookmarksnumbered=true, 
    unicode=false, 
    pdfstartview={FitH}, 
    pdftitle={}, 
    pdfauthor={}, 
    pdfsubject={}, 
    pdfcreator={}, 
    pdfproducer={}, 
    pdfkeywords={}, 
    pdfnewwindow=true, 
    colorlinks=true, 
    linkcolor=blue, 
    citecolor=blue, 
    filecolor=blue, 
    urlcolor=blue 
}

\newcounter{ex}

\theoremstyle{plain}
\newtheorem{thm}{Theorem}
\newtheorem{cor}[thm]{Corollary}
\newtheorem{lem}[thm]{Lemma}

\newtheorem{example}[ex]{Example}

\theoremstyle{definition}
\newtheorem{defn}[thm]{Definition}

\newcommand{\eq}[1]{(\hyperref[eq:#1]{\ref*{eq:#1}})}

\renewcommand{\sec}[1]{\hyperref[sec:#1]{Section~\ref*{sec:#1}}}
\newcommand{\thrm}[1]{\hyperref[thm:#1]{Theorem~\ref*{thm:#1}}}
\newcommand{\lemm}[1]{\hyperref[lemm:#1]{Lemma~\ref*{lemm:#1}}}
\newcommand{\prop}[1]{\hyperref[prop:#1]{Proposition~\ref*{prop:#1}}}
\newcommand{\corr}[1]{\hyperref[corr:#1]{Corollary~\ref*{corr:#1}}}
\newcommand{\fig}[1]{\hyperref[fig:#1]{Figure~\ref*{fig:#1}}}

\DeclareMathAlphabet{\matheu}{U}{eus}{m}{n}



\newcommand{\supp}[1]{\mathrm{supp}(#1)}

\newcolumntype{L}[1]{>{\raggedright}p{#1}}
\newcolumntype{C}[1]{>{\centering}p{#1}}
\newcolumntype{R}[1]{>{\raggedleft}p{#1}}
\newcolumntype{D}{>{\centering\arraybackslash}X}

\definecolor{darkgreen}{rgb}{0,0.5,0}
\definecolor{darkblue}{rgb}{0,0,0.5}

\newcommand{\dis}[2]{\Delta_{#1}(#2)}
\newcommand{\disj}{\Delta}

\newcommand{\dmin}{d_\downarrow}
\newcommand{\dmax}{d_\uparrow}

\newcounter{thm_counter}

\begin{document}
\title{The disjointness of stabilizer codes and limitations on fault-tolerant logical gates
}
\author{Tomas Jochym-O'Connor}
\affiliation{\small Walter Burke Institute for Theoretical Physics}
\affiliation{\small Institute for Quantum Information \& Matter, California Institute of Technology, Pasadena, CA 91125, USA}
\author{Aleksander Kubica}
\affiliation{\small Institute for Quantum Information \& Matter, California Institute of Technology, Pasadena, CA 91125, USA}
\affiliation{\small Perimeter Institute for Theoretical Physics, Waterloo, ON N2L 2Y5, Canada}
\affiliation{\small Institute for Quantum Computing, University of Waterloo, Waterloo, ON N2L 3G1, Canada}
\author{Theodore J. Yoder}
\affiliation{\small Department of Physics, Massachusetts Institute of Technology, Cambridge, MA 02139, USA}

\begin{abstract}
Stabilizer codes are a simple and successful class of quantum error-correcting codes. Yet this success comes in spite of some harsh limitations on the ability of these codes to fault-tolerantly compute. Here we introduce a new metric for these codes, the disjointness, which, roughly speaking, is the number of mostly non-overlapping representatives of any given non-trivial logical Pauli operator. We use the disjointness to prove that transversal gates on error-detecting stabilizer codes are necessarily in a finite level of the Clifford hierarchy. We also apply our techniques to topological code families to find similar bounds on the level of the hierarchy attainable by constant depth circuits, regardless of their geometric locality. For instance, we can show that symmetric 2D surface codes cannot have non-local constant depth circuits for non-Clifford gates.
\end{abstract}

\maketitle

\section{Introduction}

Quantum error-correcting codes form the foundation of scalable quantum computing \cite{Shor1995,Steane1996,Preskill1998}.
By construction, quantum codes serve as quantum memories by protecting encoded data from a noisy environment and successfully extending the storage time, at least if the noise is sufficiently small.
However, a quantum computer should do more than just store quantum data; it needs to also apply logical operations to the data \cite{Gottesman1998,Kitaev2003}.
These operations must therefore be implemented fault-tolerantly upon quantum codes.

Generally, operators are fault-tolerant if they do not couple too many qubits within a particular codeblock. This condition is sufficient to limit the spread of errors and also guarantee that if parts of the circuitry implementing the operator were to fail that not many qubits would be affected. With respect to some partitioning of the code qubits into small, disjoint subsets $Q_i$, a transversal operator acts on each subset of qubits $Q_i$ independently. For a family of codes with increasing size, a constant-depth logical operator is implementable by a constant (independent of the code size) depth circuit over the subsets $Q_i$. Transversal and constant depth circuits are some of the simplest possible fault-tolerant operators both theoretically and experimentally, so it is important to understand exactly what logical operators they can implement.

Unfortunately, the set of transversal or, more generally, constant-depth logical operators is inherently limited, with computational universality generally incommensurate with the error-correction capabilities of the code.
In particular, there is a no-go theorem due to Eastin and Knill which states that transversal operators on any non-trivial quantum code belong to a finite group, and thus cannot be universal \cite{EK09, ZCC11}.
Similar no-go theorems limiting logical operators to be in a finite level of the Clifford hierarchy were derived for transversal single-qubit gates and two-qubit diagonal gates on stabilizer codes \cite{AJ16}, as well as for constant-depth, local circuits on stabilizer and subsystem topological codes \cite{Bravyi2013,Pastawski2015}.
The latter result has an important implication --- one cannot achieve a universal gate set with constant-depth local circuits on two-dimensional (2D) topological codes such as those in \cite{Dennis2002, Bombin2006}.
We also remark that one can consider more general models beyond stabilizer codes, such as 2D topological quantum field theories, and characterize the set of gates implementable by locality-preserving unitaries \cite{beverland2016protected,Webster2017}.

Here we address several related questions regarding transversal and constant depth logical operators on stabilizer codes using a new quantity called the disjointness of the code.
The disjointness, roughly speaking, is the number of mostly non-overlapping representatives of any given non-trivial logical Pauli operator.
We use the disjointness to show that all transversal logical operators on stabilizer codes must be in the Clifford hierarchy, as conjectured by Zeng et al.~\cite{ZCC11}. Moreover, we find explicit upper bounds on the level attainable.
Importantly, our result, when applied to families of codes of growing size, restricts constant depth circuits to the Clifford hierarchy, regardless of their geometric locality.
For instance, for the 2D toric code on a square lattice of size $O(l)\times O(l)$ we find that even non-local constant depth circuits cannot implement logical non-Clifford operators.
Asymmetry of logical operators appears in our bounds as a necessary condition for possessing constant depth circuits for non-Clifford gates, such as those on 3D color and toric codes \cite{Bombin15,Kubica2015b} and on asymmetric 2D Bacon-Shor codes \cite{Yoder2017}.

\section{The intuition}\label{sec:intuition}
In this section, we sketch out the proof that constant depth circuits, even with gates that are geometrically non-local, cannot implement logical non-Clifford operators on the 2D toric code of size $O(l)\times O(l)$, see Fig.~\ref{fig:2DTC}.
We use the following two key ideas: (i) there are many non-overlapping representatives for logical Pauli $\bar X$ and $\bar Z$ operators, (ii) logical operators supported on a correctable region are trivial.

In order to find out what logical gate a unitary $U$ implements, it is sufficient to characterize the action of $U$ on the logical Pauli operators.
Let $[A,B]=ABA^\dag B^\dag$ represent the group commutator of two unitaries $A$ and $B$ \cite{Bravyi2013}.
We know that for any two logical Pauli operators $\bar P,\bar Q\in \{\bar X, \bar Z\}$, if the group commutator $[[U,\bar P],\bar Q]$ is a trivial logical operator, then the unitary $U$ implements a logical Clifford operator.\footnote{Since we restrict unitary operators to the Clifford group, it is sufficient to consider generators $\bar X$ and $\bar Z$ of the logical Pauli group. However, restricting operators to levels of the Clifford hierarchy beyond the third requires considering all logical Paulis.}

Let us pick a representative $p$ of the logical operator $\bar P \in \{ \bar X, \bar Z \}$, such that $|\supp p| = O(l)$.
We denote by $\supp A$ the set of qubits an operator $A$ acts on non-trivially (we will later generalize this notion).
Since we assume that $U$ is constant depth, then $|\supp{[U,p]}| = O(l)$.
Note that a tensor product of Pauli $Z$ operators on qubits along any vertical path on the lattice would implement the logical Pauli $\bar Z$, see Fig.~\ref{fig:2DTC}.
Similarly, Pauli $X$ operators along any horizontal path implement the logical Pauli $\bar X$.
Thus, for any operator $\bar Q \in \{ \bar X, \bar Z \}$, we can choose $O(l)$ different, non-overlapping representatives.
Using the pigeonhole principle, we are guaranteed to find a representative $q$ of $\bar{Q}$, such that it has constant overlap with $[U,p]$.
This, in turn, implies that the operator $[[U,p],q]$ is supported on a constant-size region, $|\supp{[[U,p],q]}| = O(1)$.
Since the distance of the code is $O(l)$, the region $\supp{[[U,p],q]}$ is correctable.
We conclude that $[[U,p],q]$ can only be a trivial logical operator, and thus $U$ implements a logical Clifford operator.

\begin{figure}
\centering
\subfloat[]{\includegraphics[width= 0.22\textwidth,trim={11cm 8cm 11cm 4cm},clip]{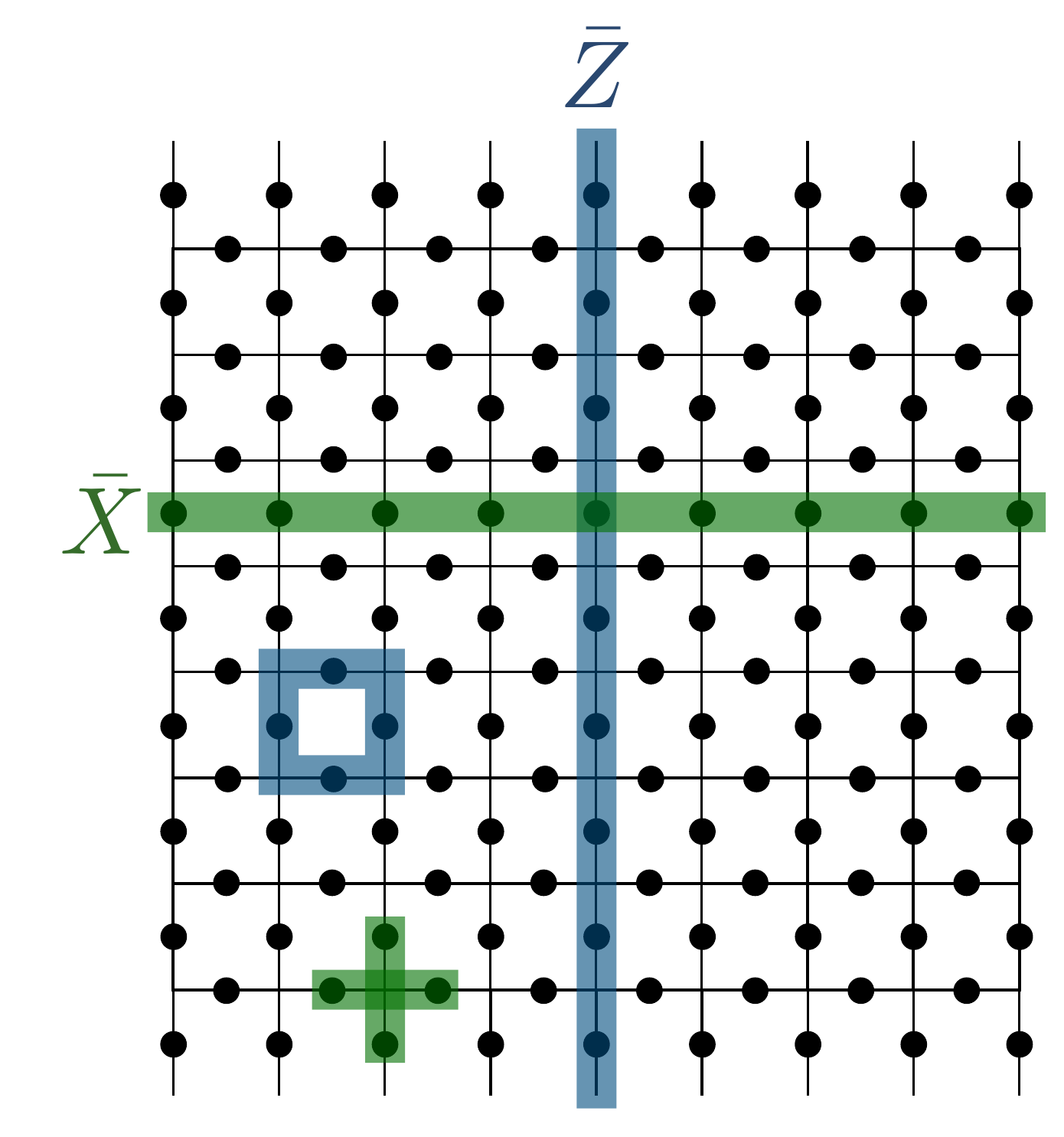}}
\hspace{1em}
\subfloat[]{\includegraphics[width= 0.22\textwidth,trim={11cm 8cm 11cm 4cm},clip]{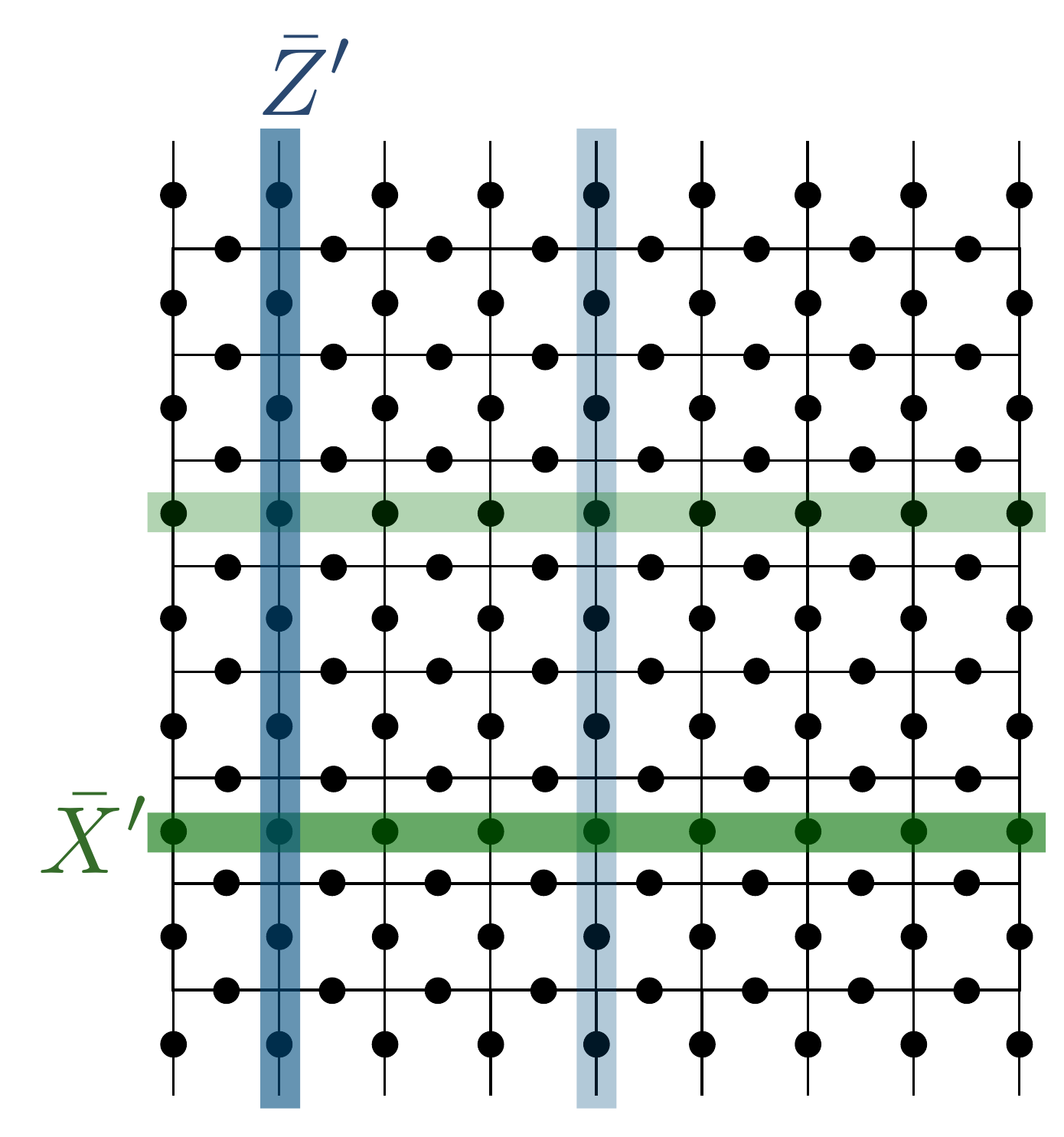}}\\
\caption{Logical Pauli string operators of the 2D surface code. In~(a), $X$-vertex and $Z$-plaquette stabilizers are shown along with a choice of Pauli logical operators. In~(b), a different, equivalent, choice for the logical Pauli operators is shown.}
\label{fig:2DTC}
\end{figure}

The property of any stabilizer code which we would like to abstract from the provided example of the 2D toric code is the existence of several (mostly) disjoint representatives of the same logical Pauli operator.
In the following sections, we will introduce a notion of disjointness of a stabilizer code, which quantitatively captures that property.
We remark that the disjointness of the 2D toric code is $O(l)$, since we can find a set of $O(l)$ non-overlapping representatives of $\bar X$ or $\bar Z$.

\section{Preliminaries}

Let us consider systems composed of $m$-dimensional qudits, $m\geq 2$.
The Pauli group on a set of $n$ qudits, denoted $\mathcal{P}_n$, is generated by the $X$- and $Z$-type operators\footnote{Qubit Paulis $m=2$ are traditionally defined to be generated by $X,Z$, and also $Y=iXZ$. Doing so does not change our results.}
\begin{equation}
X =\sum_{l=0}^{m-1}\ket{l\oplus 1}\bra{l},\quad Z =\sum_{l=0}^{m-1}\omega^l_m\ket{l}\bra{l},
\end{equation}
where addition $\oplus$ inside bra-kets is modulo $m$ and $\omega_m=\exp(2\pi i/m)$. Letting $\mathcal{U}_n$ denote the group of $n$-qudit unitaries, we note that  $\mathcal{P}_n$ is a subgroup of $\mathcal{U}_n$ because $X,Z\in\mathcal{U}_n$.

Any Pauli group $\mathcal{P}$ can be used to define a hierarchy of $n$-qudit unitaries called the Clifford hierarchy \cite{Gottesman1999b}.
The $M^{\text{th}}$ level of this hierarchy is a finite set of unitaries (if the global phases are ignored) recursively defined as
\begin{align}
C_1(\mathcal{P})&=\mathcal{P},\\\label{eq:hiearchy_alt}
C_M(\mathcal{P})&=\{U\in\mathcal{U}_n: [U,p] \in C_{M-1}(\mathcal{P}), \forall p\in\mathcal{P}\}.
\end{align}
The first and second levels of the hierarchy correspond to the Pauli and Clifford groups, respectively.

In this article, we focus our attention on a particularly popular class of quantum codes --- stabilizer codes \cite{Gottesman1997}.
A stabilizer code is defined by the stabilizer group
$\mathcal{S}=\langle s_1,s_2,\dots,s_{n-k}\rangle\subseteq\mathcal{P}_n$,
which is generated by $n-k$ mutually commuting Pauli operators.
The codespace $\mathcal{C}$ is a subspace of the Hilbert space $\mathcal{H}\simeq (\mathbb{C}^m)^{\otimes n}$ on $n$ qudits, which is the simultaneous $(+1)$-eigenspace of all stabilizer generators $s_i$.
We denote by $\llbracket n,k\rrbracket$ a qudit stabilizer code, which uses $n$ physical qudits to encode $k$ logical ones.

For any stabilizer code, a logical operator is a unitary on the Hilbert space $\mathcal{H}$ that maps states in $\mathcal{C}$ to states in $\mathcal{C}$.
In particular, logical Pauli operators can be found as elements of the normalizer $\mathcal{N}(\mathcal{S})$ of the stabilizer group $\mathcal{S}$ in the Pauli group $\mathcal{P}_n$.
We choose $2k$ generators $\bar X_i,\bar Z_i \in \mathcal{P}_n$ of the logical Pauli group $\overline{\mathcal{P}_k}$ that commute with all stabilizer generators, as well as satisfy
\begin{align}
[\bar X_i,\bar Z_j]&=\omega_m^{-\delta_{ij}}I,\\
[\bar X_i,\bar X_j]&=[\bar Z_i,\bar Z_j]=I.
\end{align}
We define $\mathcal{L}$ to be the set of sets of all non-trivial logical Pauli operators as follows
\begin{equation}
\mathcal{L} = \left\{\mathcal{S}\prod_{i=1}^k\bar X_i^{a_i} \bar Z_i^{a_{i+k}}:a\in\{0,1,\ldots, m-1 \}^{2k} \setminus \{ 0\}^{2k}\right\}.
\end{equation}
We remark that each element $G\in\mathcal{L}$ is a coset of $\mathcal{S}$ in $\mathcal{N}(\mathcal{S})$, although in examples we abuse notation and equate $G$ with the logical Pauli it corresponds to (e.g.~$\bar X=\mathcal{S}\bar X\in\mathcal{L}$).
Also, $G$ contains $|\mathcal{S}|=m^{n-k}$ representatives of the same non-trivial logical operator.

\section{Transversal gates}

All the logical operators we implement should be fault-tolerant, in a sense that they do not spread errors throughout the system in an uncontrollable way.
The simplest example of such an operator is a transversal logical operator $U$. Typically, when one says $U$ is transversal, it means that $U$ is a tensor product of single-qudit unitaries. However, we consider a more general definition of a transversal gate\footnote{Our definition is nevertheless still consistent with the definition of Eastin and Knill \cite{EK09}.}. Partition the set of $n$ physical qudits, labeled by integers from $[n]=\{1,2,\dots,n\}$, into $N$ disjoint, non-empty subsets $Q_i\subseteq [n]$, namely
\begin{equation}
[n] = Q_1 \cup Q_2 \cup \ldots \cup Q_N.
\end{equation}
Then, we say that an $n$-qudit unitary $U$ is transversal if it can be decomposed as $U = \bigotimes_{i=1}^N U_i$, where each unitary $U_i$ acts only on qudits in the subset $Q_i$.
The support of $U$, denoted by $\supp U\subseteq [N]$, is the index set of all subsets $Q_i$, on which $U$ acts non-trivially. The typical notion of transversal gate now simply corresponds to the partition into single-qudits, $Q_i=\{i\}$.

We emphasize that for a given code, the set of transversal logical operators can depend on the choice of the qudit partition. In particular, if the partition is not fixed, then one can achieve a universal gate set of transversal operators, as in the following example \cite{JL14}.
\begin{example}\label{ex:105_code}
Consider the $[[105,1]]$ code, which is a concatenation of the Steane $7$-qubit code with the $15$-qubit Reed-Muller code. We illustrate this code in Fig.~\ref{fig_partition} as a $7\times 15$ array of qubits.
We consider two qubit partitions: (a) each $Q_i$ is a subset of $7$ qubits from the $i^{\textrm{th}}$ column, (b) each $Q_i$ is a subset of $15$ qubits from the $i^{\textrm{th}}$ row.
With respect to the first and second partitions, the $[[105,1]]$ code has, correspondingly, transversal logical $T = \mathrm{diag}(1,e^{2\pi i/8})$ and Hadamard gates.
For more details, see~\cite{JL14}.
\end{example}
\noindent In contrast, we fix a partition and prove limitations on logical operators with respect to that partition. For instance, in this fixed-partition scenario, \cite{EK09} implies that the group of transversal operators is finite and therefore not universal.

\begin{figure}
\centering
\includegraphics[width= \columnwidth,]{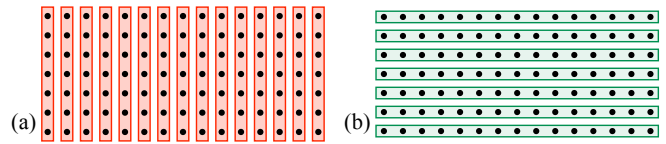}
\caption{Two different partitions of the $[[105,1]]$ qubit stabilizer code. Depending on the partition, the code can either have a transversal logical (a) $T = \mathrm{diag}(1,e^{2\pi i/8})$ gate or (b) Hadamard gate.}
\label{fig_partition}
\end{figure}

Transversal unitaries are a special case of what we call $q$-local operators of depth $h$ (with respect to the partition $\{Q_i\}$). A unitary $U$ is $q$-local of depth one, if it is transversal with respect to a second, ``coarse-grained'' partition $\{R_j\}$, where each $R_j$ is the union of at most $q$ of the $Q_i$. Accordingly, a $q$-local unitary of depth $h$ is a product of $h$ $q$-local unitaries of depth one. We note that transversal operators are 1-local unitaries of depth one.

\section{Distance and disjointness}

A fundamental property of stabilizer codes is the distance.
Typically, one says that the code has distance $d$ if it can detect any error which affects at most $d-1$ qudits.
Here, however, we consider distance with respect to the qudit partition $\{Q_i\}$.
First, we define the distance $d(G)$ of the non-trivial logical operator $G\in\mathcal{L}$ to be the size of the smallest support of any of its representatives:
\begin{equation}
d(G) = \min_{g\in G} |\supp g|.
\end{equation}
Then, we introduce two notions of the distance $\dmin$ and $\dmax$, the min- and max-distance of the code, as follows,
\begin{eqnarray}
\dmin = \min_{G\in \mathcal{L}} d(G),\quad\quad \dmax = \max_{G\in \mathcal{L}} d(G).
\end{eqnarray}
We call a code error-detecting iff its min-distance $\dmin$ is greater than one.
Note that the min-distance $\dmin$ is never greater than the (standard) distance $d$ of the code.
Also, if we choose a single-qudit partition, then those two quantities coincide, $\dmin = d$.

In this article, we propose a new quantity for quantum stabilizer codes, the disjointness, which proves remarkably useful for establishing limitations on logical gates. First, for any non-trivial logical operator $G\in\mathcal{L}$ and a positive integer $c\geq 1$ we define $c$-disjointness $\dis c G$ to be the maximal number (divided by $c$) of representatives of $G$ chosen in such a way that at most $c$ representatives have support on any $Q_i$, a subset of the qudit partition:
\begin{eqnarray}
\nonumber
\dis c G = \frac{1}{c} \max_{A\subset G} \{ |A|:&&\textrm{ at most $c$ elements $a\in A$}\quad\\
&&\textrm{ have support on any $Q_i$} \}.\quad
\label{eq_disjointness_c}
\end{eqnarray}
We call the set $A$ in Eq.~(\ref{eq_disjointness_c}) $c$-disjoint. To build intuition about the $c$-disjointness consider a small example.
\begin{example}
Consider the $\llbracket4,2\rrbracket$ qubit code with the stabilizer group $\mathcal{S}=\langle X^{\otimes4},Z^{\otimes4}\rangle$ and the single-qubit partition. There are four equivalent logical operators implementing a logical $\bar X_1=X_1X_2$, which form a set
\begin{equation}
G=\{X_1X_2, X_3X_4, Y_1Y_2Z_3Z_4, Z_1Z_2Y_3Y_4\}.
\end{equation}
The set $\{X_1X_2, X_3X_4\}$ is a maximal $1$-disjoint set, $\{X_1X_2, X_3X_4, Y_1Y_2Z_3Z_4\}$ is a maximal $2$-disjoint set, and $G$ itself is a maximal $3$-disjoint set. Thus, $\dis 1 G = 2$, $\dis 2 G =3/2$, $\dis 3 G =4/3$.
\end{example}

Now, we are ready to define the disjointness $\disj$ of a code.
\begin{defn}[disjointness]
\label{defn:disjointness}
For any $n$-qudit stabilizer code with the set of non-trivial logical operators $\mathcal{L}$ and a qudit partition $[n] = Q_1\cup Q_2 \cup \ldots \cup Q_N$,  the disjointness is defined as
\begin{equation}\label{eq:defn_disjointness}
\disj = \max_{c \geq 1} \min_{G\in \mathcal{L}} \dis c G 
\end{equation}
\end{defn}
\noindent We illustrate disjointness with the following example of the 2D surface code.
\begin{example}\label{ex:surface_code}
Consider the 2D surface code of size $l\times l$ encoding one logical qubit \cite{Bravyi1998} and the single-qubit partition. We have $\dmax=d(\bar Y)=2l-1$ and $\dmin=d(\bar X)=d(\bar Z)=l$. Moreover, there are exactly $l$ representatives of $\bar X$ with weight $l$, and they are all disjoint. Thus, $\dis 1 {\bar X}=l$. Similarly, $\dis 1 {\bar Z}=l$. In contrast, different representatives of $\bar Y$ necessarily overlap, but we can nevertheless find $l$ representatives of minimal weight $2l-1$, such that each qubit is in the support of at most two of them. Those representatives of $\bar Y$ form a 2-$disjoint$ set. Thus, $\dis 1 {\bar Y}=1$, but $\dis 2 {\bar Y}=l/2$. We conclude that the disjointness $\disj$ of the surface code satisfies $\disj\ge l/2$.
\end{example}

The disjointness $\disj$ turns out to be an important quantity characterizing stabilizer codes. In particular, we use it to find bounds on the level of the logical Clifford hierarchy achievable with transversal (see Theorem~\ref{thm:finite_code_bound} in Section~\ref{sec:limit_trans}) or constant-depth (see Theorem~\ref{thm:deep_circ_code_bound} in Section~\ref{sec:bounding_constant_depth_gates}) logical unitaries.
To facilitate further discussion, we present key properties of the disjointness.
\begin{lem}[properties of disjointness]
\label{lem:properties_of_disjointness}
For any $\llbracket n,k\rrbracket$ stabilizer code and any partition $[n]=Q_1\cup Q_2\cup\dots\cup Q_N$, the disjointness satisfies
\begin{enumerate}[label=(\roman*)]
\item $1\le \disj \le \min(\dmin,N/\dmax)$
\item $\disj > 1$ iff the stabilizer code is error-detecting, i.e., $\dmin >1$.
\end{enumerate}
\end{lem}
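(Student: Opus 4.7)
I would prove the three inequalities of (i) and then the two directions of (ii) separately.

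The two easy inequalities in (i) are direct. For $\disj \ge 1$, note that for every non-trivial $G \in \mathcal{L}$ the singleton $\{g\}$ containing any representative $g \in G$ is $1$-disjoint, so $\dis 1 G \ge 1$ for every $G$ and hence $\disj \ge 1$. For $\disj \le N/\dmax$, pick a logical $G^*$ attaining $d(G^*) = \dmax$: every representative of $G^*$ occupies at least $\dmax$ subsets, so any $c$-disjoint $A \subset G^*$ obeys $|A|\,\dmax \le \sum_{a \in A} |\supp a| \le cN$, giving $\dis c {G^*} \le N/\dmax$ and therefore $\min_{G} \dis c G \le N/\dmax$ for every $c$.

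The remaining bound $\disj \le \dmin$ is the subtle one, since it compares the disjointness of one logical to the distance of a different logical. The key is anticommutation: pick a logical $G_0 \in \mathcal{L}$ attaining $d(G_0) = \dmin$, with minimum-weight representative $g_0$, and choose any logical $G^*$ whose representatives fail to commute with those of $G_0$ (existence is guaranteed by nondegeneracy of the symplectic form on the logical Pauli group $\overline{\mathcal{P}_k}$). For every $g^* \in G^*$ the group commutator $[g^*, g_0]$ equals $[g^*|_{\supp g_0}, g_0]$ (since $g_0$ is trivial outside $\supp g_0$), so nontrivial commutation forces $g^*|_{\supp g_0} \ne I$; equivalently, $g^*$ uses at least one of the $\dmin$ subsets in $\supp g_0$. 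Double-counting such inclusions in a $c$-disjoint $A \subset G^*$ yields $|A| \le c\,|\supp g_0| = c\,\dmin$, so $\dis c {G^*} \le \dmin$ and therefore $\disj \le \dmin$.

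For (ii), the direction $\disj > 1 \Rightarrow \dmin > 1$ is immediate from (i). For the converse, assume $\dmin \ge 2$: no non-trivial logical then has a weight-one representative, so no non-trivial logical is supported entirely within any single subset $Q_i$, i.e., each singleton region $\{Q_i\}$ is correctable. The Bravyi--Terhal cleaning lemma then yields a representative of every $G \in \mathcal{L}$ supported in the complement of any chosen $Q_i$. Consequently at most $|\mathcal{S}| - 1$ of the $|G| = |\mathcal{S}|$ representatives of $G$ have support on any particular $Q_i$, so $A = G$ itself is $(|\mathcal{S}|-1)$-disjoint with $|A|/(|\mathcal{S}|-1) = |\mathcal{S}|/(|\mathcal{S}|-1) > 1$; hence $\min_{G} \dis {|\mathcal{S}|-1} G > 1$ and $\disj > 1$. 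The main obstacle I expect is the $\disj \le \dmin$ bound, where the anticommutation reduction is what lets a distance fact about $G_0$ constrain representatives of a different logical $G^*$; for the backward direction of (ii), the cleaning lemma does the heavy lifting of converting the distance hypothesis into an existence statement about representatives avoiding each $Q_i$.
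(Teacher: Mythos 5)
Your proof is correct and takes essentially the same route as the paper's: the lower bound from singletons, the bound $\disj\le N/\dmax$ by double counting weights of a minimum-distance-$\dmax$ logical, the bound $\disj\le\dmin$ by forcing every representative of an anticommuting logical $G^*$ to intersect a minimum-weight $g_0$ and double counting, and the Cleaning Lemma for the backward direction of (ii). The only substantive difference is in (ii): you take $A=G$ itself with the uniform choice $c=|\mathcal{S}|-1$, whereas the paper exhibits a set of $d(G)+1$ representatives with $c=|A|-1$ depending on $G$; your uniform $c$ makes the final passage to $\disj=\max_{c}\min_{G}\dis{c}{G}>1$ immediate, which is a small but genuine tidying of the argument.
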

\begin{proof}
We begin by proving four bounds on $c$-disjointness that together imply (i). In particular, let $G,G'\in\mathcal{L}$ be two non-commuting, non-trivial logical operators. That is, $[g,g']\neq I$ for all $g\in G$ and $g'\in G'$. Then, for any $1\leq c \leq m^{n-k}$ (recall $m$ is the qudit dimension),
\begin{align}\label{eq:triv_bound}
1\le\dis c G&\le m^{n-k}/c,\\\label{eq:dmin_bound}
\dis c G &\le d(G'),\\\label{eq:dmax_bound}
\dis c G d(G)&\le N.
\end{align}
Moreover, each upper bound holds for all $c\ge1$.

The lower bound in Eq.~\eqref{eq:triv_bound} is true because any $c\le m^{n-k}$ elements of $G$ form a $c$-disjoint set of size $c$. The upper bound in Eq.~\eqref{eq:triv_bound} results because any $c$-disjoint set $A\subseteq G$ satisfies $|A|\le |G|=m^{n-k}$. As a result of the upper bound, for any $c>m^{n-k}$, $\dis c G<1$. Along with the lower bound, this implies $\min_{c\ge1}\dis c G=\min_{1\le c\le m^{n-k}}\dis c G$ for all $G\in\mathcal{L}$, which simplifies the definition of disjointenss Eq.~\eqref{eq:defn_disjointness}.

For Eq.~\eqref{eq:dmin_bound}, choose a maximal $c$-disjoint set $A\subseteq G$ and a representative $g'\in G'$ of minimal support. That is, $|A|=c\dis c G$ and $|\supp{g'}|=d(G')$. By definition, every $g\in A$ does not commute with $g'$. Thus, $g$ and $g'$ have to have non-trivial overlap, $|\supp{g}\cap\supp{g'}|\ge 1$. 
Consider any collection $H\subseteq [N]$ of some qudit subsets $Q_i$. 
Since at most $c$ elements of $A$ intersect at any subset of qudits $Q_i$, we have the inequality
\begin{eqnarray}\label{eq:total_c_disjoint_intersection_0}
\sum_{g\in A}|\supp{g}\cap H| &=& \sum_{g\in A}\sum_{i\in H}|\supp{g}\cap \{i\}|\\\label{eq:total_c_disjoint_intersection}
&\le& \sum_{i\in H} c \cdot 1 = c|H|.
\end{eqnarray}
Therefore,
$c\dis c G = |A| = \sum_{g\in A} 1 \leq \sum_{g\in A}|\supp{g}\cap\supp{g'}|\le cd(G')$, proving Eq.~\eqref{eq:dmin_bound}.

Similarly, Eq.~\eqref{eq:dmax_bound} follows from Eqs.~(\ref{eq:total_c_disjoint_intersection_0}-\ref{eq:total_c_disjoint_intersection}) by setting $H=[N]$ and using $|\supp{g}|\ge \min_{p\in G} |\supp p| = d(G)$.

To get (i) from Eqs.~(\ref{eq:triv_bound}-\ref{eq:dmax_bound}), note that they each hold for all $c$, and so we can replace $\dis c G$ with $\max_{c\ge1}\dis c G$ in all three equations. Since Eq.~\eqref{eq:triv_bound} also holds for all $G\in\mathcal{L}$, minimizing it over $G$ immediately implies $1\le\disj$ as well. In Eq.~\eqref{eq:dmin_bound} take $G'\in\mathcal{L}$ such that $d(G')=\dmin$ (and $G$ to be any anti-commuting logical Pauli) and in Eq.~\eqref{eq:dmax_bound} take $G\in\mathcal{L}$ so that $d(G)=\dmax$ to conclude $\disj\le \max_{c\ge1}\dis c G\le\dmin$ and $\disj\le\max_{c\ge1}\dis c G\le N/\dmax$, respectively.

We now prove (ii).
First, note that the implication $\dmin=1 \Longrightarrow\disj=1$ follows from (i). To show $\dmin>1 \Longrightarrow \disj>1$, we establish a stronger fact: for all $G\in\mathcal{L}$, if $\dmin>1$, then there exists $1\leq c \leq d(G)$ such that $\dis c G>1$. We make use of the following version of the Cleaning Lemma.
\begin{lem}[Cleaning Lemma \cite{Bravyi2009,Yoshida2010}]
\label{lem:cleaning}
For any non-trivial logical operator $G\in\mathcal{L}$ and any collection $R\subseteq[N]$ of qudit subsets $Q_i$ such that $|R|<\dmin$, there exists a representative $g\in G$ not supported on $R$, i.e., $\supp{g}\cap R=\emptyset$.
\end{lem}
\noindent Suppose $g$ is a minimal weight representative of $G$ and set $H=\supp{g}$.
Without loss of generality, we assume $H=[d(G)]$ (which might involve relabeling the qudit subsets $Q_i$).
For any $i\in H$, Lemma~\ref{lem:cleaning} and the assumption $\dmin>1$ guarantee we can find $g_i\in G$ that is not supported on the qudit subset $Q_i$, i.e, $i\not\in\supp{g_i}$.
We choose all distinct representatives $g,g_1,\ldots,g_{d(G)}$ of $G$ to form a set $A$.
By construction, there are at most $|A|-1$ elements of the set $A$ intersecting at any qudit subset $Q_i$.
Namely, if $i\in H$, then $i\not\in\supp{g_i}$, whereas if $i\not\in H$, then $i\not\in\supp g$.

Thus, the set $A$ can serve as an example of a $c$-disjoint subset of $G$ for $c=|A|-1$, and we obtain a lower bound $\dis c G \geq |A|/c > 1$ on the $c$-disjointness of $G$.
This, in turn, implies that the disjointness $\disj$ of the code is greater than one, $\disj >1$, finishing the proof of (ii).
\end{proof}

\noindent Certain codes even have disjointness saturating the upper bound in Lemma~\ref{lem:properties_of_disjointness}(i), as in the following example.
\begin{example}\label{ex:q_reed_muller}
Consider the family of Reed-Muller codes $\llbracket n =2^{D+1}-1,k=1\rrbracket$ for $D\geq 2$, which coincides with a family of color codes of distance three in $D$ spatial dimensions \cite{Steane1999,Anderson2014, Kubica2015a}.
We consider the single-qubit partition.
The two smallest codes in this family correspond to the $7$-qubit Steane and the $15$-qubit Reed-Muller codes.
The distance of logical $\bar X, \bar Y, \bar Z$ operators satisfies $d(\bar X)=d(\bar Y)=2^{D}-1$ and $d(\bar Z)=3$.
Thus, $\dmin = 3$ and $\dmax = 2^{D}-1$.
There are $2^{D+1}$ representatives of $\bar X$ and $2^{D+1}-1$ of them have minimal support.
The set of minimal representatives of $\bar X$ is, in fact, $\dmax$-disjoint, and therefore $\Delta_{\dmax}(\bar X)=n/\dmax$.
Moreover, for each representative $g$ of $\bar X$ one can always find at least one representative of $\bar Z$ (and thus of $\bar Y$) supported on $\supp g$.
We obtain that $\Delta_{\dmax}(\bar Z),\Delta_{\dmax}(\bar Y)\ge\Delta_{\dmax}(\bar X)=n/\dmax$, which results in a bound on the disjointness $\Delta\ge n/\dmax$.
However, $\disj \leq n/\dmax$ from Lemma~\ref{lem:properties_of_disjointness}(i), implying $\Delta = n/\dmax$.
\end{example}

The $c$-disjointness $\dis c G$ of a non-trivial logical operator $G\in\mathcal{L}$ quantifies how well $G$ can be ``cleaned'' (in the sense of \cite{Bravyi2009}) from an arbitrary subset of qudits. We conclude this section with a useful lemma needed to prove main results of our work.
\begin{lem}[scrubbing lemma]
\label{lem:scrubbing}
Consider a non-trivial logical operator $G\in\mathcal{L}$ and a collection $H\subseteq [N]$ of qudit subsets $Q_i$. For any $1\leq c\leq m^{n-k}$, there exists a representative $g\in G$ such that
\begin{equation}
\dis c G |\supp{g}\cap H|\le|H|.
\end{equation}
\end{lem}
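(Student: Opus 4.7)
The plan is to prove the scrubbing lemma by a double-counting/averaging argument applied to a maximal $c$-disjoint set of representatives of $G$. This is essentially the same counting idea already used in the proof of Lemma~\ref{lem:properties_of_disjointness}, namely Eqs.~\eqref{eq:total_c_disjoint_intersection_0}--\eqref{eq:total_c_disjoint_intersection}, but now combined with a pigeonhole/averaging step to extract a single well-behaved representative.

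First I would fix $c$ with $1\le c\le m^{n-k}$ and choose a maximal $c$-disjoint set $A\subseteq G$, so that $|A|=c\,\dis{c}{G}$. By the definition of $c$-disjointness, for every index $i\in[N]$ at most $c$ elements of $A$ have support touching the subset $Q_i$. Next I would bound the total overlap of $A$ with $H$ by swapping the order of summation:
\begin{equation}
\sum_{g\in A}|\supp{g}\cap H|=\sum_{i\in H}|\{g\in A:i\in\supp{g}\}|\le c|H|.
\end{equation}

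Finally I would apply the averaging principle: since the sum of $|\supp{g}\cap H|$ over $|A|=c\,\dis{c}{G}$ elements is at most $c|H|$, there must be some $g\in A\subseteq G$ with
\begin{equation}
|\supp{g}\cap H|\le\frac{c|H|}{c\,\dis{c}{G}}=\frac{|H|}{\dis{c}{G}},
\end{equation}
which rearranges to the desired inequality $\dis{c}{G}\,|\supp{g}\cap H|\le|H|$. The only thing to be careful about is that the averaging step requires $|A|>0$, but the lower bound $\dis{c}{G}\ge 1$ from Eq.~\eqref{eq:triv_bound} guarantees $|A|\ge c\ge 1$ for all admissible $c$, so there is no degenerate case.

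I do not expect a genuine obstacle here; the statement is essentially an averaging consequence of the definition, and the only mild subtlety is remembering that we are free to use any maximal $c$-disjoint set (whose size is exactly $c\,\dis{c}{G}$ by definition) rather than the full coset $G$, which is what gives a tight bound.
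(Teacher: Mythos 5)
Your proposal is correct and is essentially identical to the paper's own proof: both pick a maximal $c$-disjoint set $A$ with $|A|=c\,\dis{c}{G}$, bound $\sum_{g\in A}|\supp{g}\cap H|\le c|H|$ by swapping the order of summation (the paper's Eqs.~(\ref{eq:total_c_disjoint_intersection_0}--\ref{eq:total_c_disjoint_intersection})), and then extract a good representative by the min-versus-average argument. No gaps; your remark about $|A|\ge 1$ is a harmless extra check.
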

\begin{proof}
Let $A\subseteq G$ be a maximal $c$-disjoint set, $|A| = c\dis c G$. Then,
\begin{eqnarray}
&&\dis c G |\supp{g}\cap H| = \frac{1}{c}|A|\min_{g\in A}|\supp{g}\cap H|\\
&&\leq \frac{1}{c} \sum_{g\in A} |\supp{g}\cap H| \le |H|,
\end{eqnarray}
where we use Eqs.~(\ref{eq:total_c_disjoint_intersection_0}-\ref{eq:total_c_disjoint_intersection}) for the second inequality.
\end{proof}
\noindent We note that if $\dis c G =1$, then the bound in Lemma~\ref{lem:scrubbing} is trivial, $|\supp{g}\cap H|\le|H|$. We get a non-trivial bound whenever $\dis c G >1$, which is exactly the situation for error-detecting stabilizer codes, see  Lemma~\ref{lem:properties_of_disjointness}(ii).

\section{Limitations on transversal gates}\label{sec:limit_trans}

In this section, we use the disjointness to bound the transversal logical gates on any error-detecting stabilizer code to the Clifford hierarchy of the logical Pauli group $\overline C_M = C_M(\overline{\mathcal{P}})$.
We start with a theorem for transversal operators on a single codeblock, which we later generalize to operators between $r$ codeblocks.

\setcounter{thm_counter}{\value{thm}}
\begin{thm}
\label{thm:finite_code_bound}
Consider a stabilizer code with min-distance $d_{\downarrow}$, max-distance $d_{\uparrow}$, and disjointness $\disj$.
If $M$ is an integer satisfying 
\begin{equation}
\dmax < \dmin \disj^{M-1},
\label{eq_maxlevel}
\end{equation}
then all transversal logical operators are in the $M^{\text{th}}$ level of the Clifford hierarchy $\overline C_M$.
\end{thm}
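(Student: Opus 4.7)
My plan is to show that for any sequence of logical Pauli cosets $\bar P_1,\ldots,\bar P_M \in \mathcal{L}$, one can pick representatives $p_j \in \bar P_j$ so that the iterated commutator $V_M := [\cdots[[U,p_1],p_2]\cdots,p_M]$ is supported on fewer than $\dmin$ qudit subsets. Since $V_M$ then preserves the codespace (it is a product of codespace-preserving operators) and lies in a correctable region, the Knill-Laflamme conditions will imply $V_M|_{\mathcal{C}} = cI$ for some scalar $c$. Combined with the fact that this scalar depends only on the cosets and not on the chosen representatives (because changing a representative by a stabilizer only conjugates the intermediate $V_j$ by stabilizers, which act trivially on $\mathcal{C}$), unrolling Eq.~\eqref{eq:hiearchy_alt} will yield $U \in \overline{C}_M$.

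The first observation is that transversality is preserved under commutation with a Pauli: if $U = \bigotimes_i U_i$ and $p_1 = \bigotimes_i (p_1)_i$, then $V_1 = [U,p_1] = \bigotimes_i [U_i,(p_1)_i]$, which is transversal on $\supp{p_1}$ because $[U_i,(p_1)_i]=I$ wherever $(p_1)_i = I$. Iterating, $V_j$ is transversal on $\supp{p_1} \cap \cdots \cap \supp{p_j}$, so in particular $|\supp{V_j}| \le |\supp{p_1} \cap \cdots \cap \supp{p_j}|$.

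Next, I would iteratively apply the scrubbing lemma (Lemma~\ref{lem:scrubbing}) to keep this intersection small. Let $c^*$ achieve the outer maximum in Definition~\ref{defn:disjointness}, so that $\dis{c^*}{\bar P} \ge \disj$ for every $\bar P \in \mathcal{L}$. First pick $p_1 \in \bar P_1$ of minimum weight, giving $|\supp{p_1}| \le \dmax$. For each $j = 2,\ldots,M$, apply the scrubbing lemma to $\bar P_j$ with the region $H_{j-1} := \supp{p_1} \cap \cdots \cap \supp{p_{j-1}}$ and $c = c^*$ to obtain a representative $p_j \in \bar P_j$ with $|\supp{p_j} \cap H_{j-1}| \le |H_{j-1}|/\disj$. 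After $M-1$ such steps,
\begin{equation}
|\supp{V_M}| \le |H_M| \le \dmax/\disj^{M-1} < \dmin,
\end{equation}
where the last inequality is exactly the hypothesis Eq.~\eqref{eq_maxlevel}.

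The main obstacle I anticipate is the final step: converting $|\supp{V_M}| < \dmin$ together with codespace-preservation into the statement that $V_M|_{\mathcal{C}}$ is a scalar, and then unrolling that into $U \in \overline{C}_M$. The former follows from the Knill-Laflamme correctness conditions applied to $V_M$ viewed as an error on a correctable region, which give $P_{\mathcal{C}} V_M P_{\mathcal{C}} = cP_{\mathcal{C}}$; combined with $V_M\mathcal{C} \subseteq \mathcal{C}$ this yields $V_M|_{\mathcal{C}} = cI$. The unrolling is a backward induction: given that $V_M|_{\mathcal{C}}$ is scalar for every choice of cosets, $V_{M-1}|_{\mathcal{C}}$ conjugates every logical Pauli to a scalar multiple of itself, so by a Schur-style argument $V_{M-1}|_{\mathcal{C}}$ is itself a logical Pauli (up to phase), hence $V_{M-2}|_{\mathcal{C}}$ is a logical Clifford, and so on until $V_0 = U$ is realized as an element of the $M$-th level of the logical Clifford hierarchy.
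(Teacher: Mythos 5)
Your proposal is correct and follows essentially the same route as the paper: iterated group commutators whose supports shrink via the scrubbing lemma (you track the intersection $\supp{p_1}\cap\cdots\cap\supp{p_j}$, which contains $\supp{V_j}$, rather than $\supp{V_j}$ itself, but this yields the identical bound $\dmax/\disj^{M-1}<\dmin$), followed by the observation that a logical operator supported on fewer than $\dmin$ subsets is logically trivial and an unrolling of the hierarchy definition. Your final paragraph is in fact more explicit than the paper's proof about why logical triviality of $V_M$ and representative-independence let one climb back up to $U\in\overline{C}_M$.
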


\begin{proof}
Let $G_j\in\mathcal{L}$ be any non-trivial logical Pauli operator, and let $K_0$ be a transversal logical operator.
We choose a representative $g_1$ of $G_1$ to have minimal support, $|\supp{g_1} | = d(G_1)$.
For $j\geq 1$, we recursively define $K_j=[K_{j-1},g_j]$, which is a transversal logical operator, and find $g_{j+1}\in G_{j+1}$ satisfying Lemma~\ref{lem:scrubbing} with $H=\supp{K_{j}}$.
Notice that bounding the support of the group commutator of two transversal operators $U_1,U_2$ is especially simple
\begin{equation}\label{eq:transversal_commutator}
\supp{[U_1,U_2]}\subseteq\supp{U_1}\cap\supp{U_2},
\end{equation}
which leads to the following bound for $j>1$
\begin{align}\label{eq:support_inequality}
|\supp{K_j}|&\le|\supp{K_{j-1}}\cap\supp{g_j}|\\
&\le |\supp{K_{j-1}}|/\dis{c_j}{G_j},
\end{align}
where the first and second inequalities were obtained by using Eq.~\eqref{eq:transversal_commutator} and Lemma~\ref{lem:scrubbing}, respectively.
Since we may choose arbitrary $c_j$, we set $c_j=\text{argmax}_{c\geq 1}\dis{c}{G_j}$. Now, using \eqref{eq:support_inequality} recursively, we find
\begin{eqnarray}
|\supp{K_M}| &\le& |\supp{K_1}| \prod_{j=2}^M\hspace{-2pt}\dis{c_j}{G_j}^{-1} \\ 
&\leq& \dmax /\disj^{M-1} \leq \dmin,
\end{eqnarray}
where in the the second inequality we used $|\supp{K_1}|\le |\supp{g_1}| \leq d(G_1) \leq \dmax$ and $\disj \geq \dis{c_j}{G_j}$. 
Since $|\supp{K_M}|$ is smaller than the min-distance $\dmin$ of the code, $K_M$ has to be a trivial logical operator.
Therefore, by definition of the Clifford-hierarchy, we recursively obtain that $K_{M-j}$ is a logical operator from the $j^{\text{th}}$ level. 
In particular, $K_0$ must be in the $M^{\text{th}}$ level $\overline C_M$.
\end{proof}

We remark that Theorem~\ref{thm:finite_code_bound} implies that transversal operators on a single codeblock of any error-detecting code must be in a finite level of the Clifford hierarchy.
Namely, from Lemma~\ref{lem:properties_of_disjointness}(ii) we get $\disj >1$, and thus we can always find an integer $M = \lceil \log_\disj (\dmax/\dmin) \rceil$ satisfying Eq.~\eqref{eq_maxlevel}.
We illustrate Theorem~\ref{thm:finite_code_bound} with the following examples.

\begin{example}
The non-CSS 5-qubit stabilizer code \cite{Bennett1996,Laflamme1996} has the stabilizer group $\mathcal{S}=\langle Z_1Z_2X_3X_5,X_1Z_2Z_3X_4,X_2Z_3Z_4X_5,X_1X_3Z_4Z_5\rangle$ and logical Pauli representatives $\bar X=X^{\otimes5}$ and $\bar Z=Z^{\otimes5}$ has $d_{\uparrow}=d_{\downarrow}=3$ and $\disj =5/3$ with respect to the single-qubit partition. Thus, $d_{\uparrow}<d_{\downarrow}\disj$ and so transversal logical gates must be in the Clifford group. In fact, the 5-qubit code has a transversal logical Clifford gate $SH$.
\end{example}

\begin{example}\label{ex:q_reed_muller_2}
As we already discussed in Example~\ref{ex:q_reed_muller}, the Reed-Muller code $\llbracket n = 2^{D+1}-1,k=1\rrbracket$ has parameters $\dmin=3$, $\dmax=2^{D}-1$ and $\Delta=n/\dmax$.
Thus, Theorem~\ref{thm:finite_code_bound} implies that the code can have transversal logical gates from at most the $M^{\text{th}}$ level of the Clifford hierarchy, where $M = \lceil \log_\disj (\dmax/\dmin) \rceil = D$.
In fact, the Reed-Muller code saturates this bound for any $D\geq 2$, since it has a transversal logical $\bar R_{D} = \mathrm{diag}(1,e^{2\pi i/ 2^{D}})$ gate.
\end{example}
\begin{example}
Depending on the qubit partition of the $\llbracket 105,1\rrbracket$ code from Example~\ref{ex:105_code}, its parameters are:
(a) $\dmin=3$, $\dmax=7$ and $\Delta=15/7$ or (b) $\dmin=\dmax=3$ and $\Delta=7/3$.
Thus, Theorem~\ref{thm:finite_code_bound} limits transversal logical gates with respect to the qubit partition to: (a) the third level of the Clifford hierarchy and (b) the Clifford group.
We emphasize that the transversal gates on the $\llbracket 105,1\rrbracket$ code saturate those bounds \cite{JL14}.
\end{example}

It is possible to treat multiple codeblocks (these need not even be the same code) as one large effective code. If the $b^{\text{th}}$ codeblock has partition $\{Q_i^{(b)}\}$, one can define a partition $\{Q_i\}$ of the effective code with each $Q_i$ consisting of (at most) one subset $Q_i^{(b)}$ from each codeblock. Moreover, if the partitions of each codeblock have distance $\dmin^{(b)}>1$, so too will the partition of the effective code have $\dmin>1$. 
Then, applying Theorem~\ref{thm:finite_code_bound} to the effective code leads to the following corollary.
\begin{cor}
Transversal gates on error-detecting stabilizer codes must be in the Clifford hierarchy.
\end{cor}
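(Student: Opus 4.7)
The plan is to reduce the multi-codeblock case to the single-codeblock case already handled by Theorem~\ref{thm:finite_code_bound}, following the effective-code construction sketched in the paragraph immediately preceding the corollary. Let $U$ be a transversal logical gate acting across $r$ codeblocks, where the $b^{\text{th}}$ codeblock is an error-detecting stabilizer code with partition $\{Q_i^{(b)}\}$ and $\dmin^{(b)}>1$. I would first define the effective code as the tensor product of the $r$ codeblocks (its stabilizer group being generated by the union of the individual stabilizer groups), and choose an effective partition $\{Q_i\}$ by aligning the blocks across codeblocks so that each $Q_i$ contains at most one $Q_j^{(b)}$ from each codeblock. Under this grouping, ``transversal across codeblocks'' in the original sense coincides with ``transversal with respect to $\{Q_i\}$'' in the effective code, so $U$ is a transversal logical operator on the effective code.

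The next step is to verify that the effective code is error-detecting with respect to $\{Q_i\}$, i.e., that its min-distance $\dmin^{\mathrm{eff}}$ exceeds $1$. Any non-trivial logical operator $\bar G$ of the effective code acts non-trivially on at least one codeblock $b$; its restriction to that codeblock is itself a non-trivial logical Pauli of the $b^{\text{th}}$ code (modulo stabilizers), hence has support on at least $\dmin^{(b)}$ of the $Q_j^{(b)}$. Because each $Q_i$ in the effective partition contains at most one $Q_j^{(b)}$ from codeblock $b$, these give rise to at least $\dmin^{(b)}$ distinct indices in $\supp{\bar G}$ under the effective partition. Thus $\dmin^{\mathrm{eff}}\ge \min_b \dmin^{(b)}>1$.

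Having established that the effective code is error-detecting, Lemma~\ref{lem:properties_of_disjointness}(ii) gives disjointness $\disj^{\mathrm{eff}}>1$, and Theorem~\ref{thm:finite_code_bound} applies: there exists a finite integer $M$, for instance $M=\lceil\log_{\disj^{\mathrm{eff}}}(\dmax^{\mathrm{eff}}/\dmin^{\mathrm{eff}})\rceil$, such that $U\in \overline{C}_M$. This concludes that every transversal logical gate on error-detecting stabilizer codes lies in some finite level of the Clifford hierarchy.

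The main conceptual obstacle is the step showing $\dmin^{\mathrm{eff}}>1$, because one must be careful that combining partitions from different codeblocks into larger blocks does not shrink the measured support of a logical operator below the individual min-distances. The argument above handles this by using the fact that the grouping places at most one $Q_j^{(b)}$ per codeblock into each effective block, so no cancellation of support indices occurs within a single codeblock. Everything else is a direct invocation of Theorem~\ref{thm:finite_code_bound} on the effective code.
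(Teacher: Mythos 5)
Your proposal is correct and follows essentially the same route as the paper: form the effective multi-codeblock code with the aligned partition, observe that the effective min-distance stays above one, and then invoke Lemma~\ref{lem:properties_of_disjointness}(ii) and Theorem~\ref{thm:finite_code_bound}. The only addition is that you spell out the (correct) argument for $\dmin^{\mathrm{eff}}\ge\min_b \dmin^{(b)}>1$, which the paper merely asserts.
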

\noindent This in turn implies that the group of transversal logical gates on stabilizer codes is finite and not universal, providing an alternative proof of the main result of \cite{ZCC11}.

There are subtleties with this simple argument for multi-codeblock operators. First, it leaves the possibility that the achievable level of the Clifford hierarchy might depend on the number of considered codeblocks.
Second, the bound on level is not conveniently stated in terms of $d_{\downarrow},d_{\uparrow},\disj$ of the base code, but rather of the effective multiblock code.
We address both of these issues in Appendix~\ref{app:multiple_codeblock} with more detailed arguments for the multi-codeblock case.
We summarize the results with the following version of Theorem~\ref{thm:finite_code_bound} for stabilizer codes with multiple codeblocks. 

\begin{thm}[multi-codeblock case]
\label{thm:multiblock_code_bound_main}
Consider an $\llbracket n,k\rrbracket$ stabilizer code constructed from $m$-dimensional qudits. With respect to a partition of the qudits into $N$ subsets $Q_i$, let the code's parameters be $\dmin$, $\dmax$, and $\disj$. Now, consider $r$ codeblocks of this code, and let $r'=\min(r,N!m^{n-k})$. If $M$ is an integer satisfying
\begin{equation}\label{eq:multiblock_code_bound_main}
r'd_{\uparrow}\left(1-(1-1/\disj)^{r'}\right)^{M-1}<d_{\downarrow},
\end{equation}
then all transversal logical operators on $r$ codeblocks are in the $M^{\text{th}}$ level of the hierarchy $\overline C_M$.
\end{thm}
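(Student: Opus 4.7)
The plan is to regard the $r$ codeblocks collectively as a single $\llbracket rn,rk\rrbracket$ effective stabilizer code whose partition bundles at most one subset $Q_i^{(b)}$ from each codeblock into one effective subset. I will identify effective values of $\dmin$, $\dmax$, and $\disj$ for this composite code and then invoke Theorem~\ref{thm:finite_code_bound} directly. Since any non-trivial effective logical Pauli may be chosen to act on just a single codeblock, the effective min-distance satisfies $\dmin^{\text{eff}}=\dmin$. The effective max-distance satisfies $\dmax^{\text{eff}}\le r\dmax$, because a tensor product representative of weight at most $\dmax$ on each codeblock always exists.

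The main technical content is the lower bound on the effective disjointness. Fix an effective logical Pauli $G_{\text{eff}}=G_1\otimes\cdots\otimes G_r$ that is non-trivial on a subset $S\subseteq[r]$ of codeblocks. Choose a single $c$ achieving $\dis{c}{G_b}\ge\disj$ for every $b\in S$, together with $c$-disjoint sets $A_b\subseteq G_b$ of size at least $c\disj$. Now form the $\prod_{b\in S}|A_b|$ product representatives $\bigotimes_{b\in S}g_b$ with $g_b\in A_b$. Since each effective subset $Q_i$ meets codeblock $b$ in at most one original subset, and at most the fraction $1/\disj$ of $A_b$ is supported there, a straightforward inclusion--exclusion count shows that the fraction of product representatives touching any given $Q_i$ is at most $1-(1-1/\disj)^{|S|}$. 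Translating this back into the definition of $c'$-disjointness yields $\dis{c'}{G_{\text{eff}}}\ge 1/[1-(1-1/\disj)^{|S|}]$, and minimizing over $G_{\text{eff}}$ (worst case $|S|=r$) gives the effective disjointness $\disj^{\text{eff}}\ge 1/[1-(1-1/\disj)^r]$.

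Feeding these three effective parameters into Theorem~\ref{thm:finite_code_bound} immediately yields $r\dmax(1-(1-1/\disj)^r)^{M-1}<\dmin$, exactly matching Eq.~\eqref{eq:multiblock_code_bound_main} with $r'=r$. The remaining step, sharpening $r$ to $r'=\min(r,N!m^{n-k})$, is where I expect the subtlety and the main obstacle to lie. The intuition is a pigeonhole: there are only $N!$ permutations of the $N$ qudit subsets inside one codeblock and at most $m^{n-k}=|\mathcal{S}|$ distinct representatives per logical Pauli on that codeblock, so once $r$ exceeds $N!m^{n-k}$ every useful configuration of per-codeblock disjoint sets has already been realized; further codeblocks then cannot worsen the effective disjointness that enters the nested-commutator recursion driving Theorem~\ref{thm:finite_code_bound}. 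Making this precise, in particular ensuring the recursion really interacts with no more than $r'$ codeblocks at each level rather than merely bounding the counting of representatives, requires the careful case analysis deferred to Appendix~\ref{app:multiple_codeblock}.
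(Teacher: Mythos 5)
Your reduction to an effective $\llbracket rn,rk\rrbracket$ code with $d_{\downarrow,\text{eff}}=\dmin$ and $d_{\uparrow,\text{eff}}\le r\dmax$ is exactly what the paper does in Appendix~\ref{app:multiple_codeblock}, and your treatment of the disjointness is a legitimate variant of the paper's. The paper never computes an effective disjointness from Definition~\ref{defn:disjointness}; instead it proves a multi-codeblock scrubbing lemma (Lemma~\ref{lem:multiblock_scrubbing}) by greedily choosing $g^{(1)},g^{(2)},\dots$ so that each successive representative cleans a $1/\dis{c_j}{G^{(j)}}$ fraction of whatever part of $H$ is still untouched, yielding $|H\cap\supp{g}|\le\bigl(1-\prod_b(1-\dis{c_b}{G^{(b)}}^{-1})\bigr)|H|$. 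Your full product set $\{\bigotimes_b g_b:g_b\in A_b\}$ with the inclusion--exclusion count gives the same factor and is correct (a single $c$ with $\dis{c}{G_b}\ge\disj$ for all $b$ exists precisely because $\disj$ is a max-min). One caution: the $c'$ you extract depends on $G_{\text{eff}}$, so what you actually bound is $\min_{G}\max_{c'}\dis{c'}{G}$ rather than the max-min quantity required by Definition~\ref{defn:disjointness}. This is harmless --- the proof of Theorem~\ref{thm:finite_code_bound} only ever invokes Lemma~\ref{lem:scrubbing} with $c_j$ optimized per coset --- but it means you should re-run that proof on the effective code rather than cite the theorem as a black box.

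The genuine gap is the step you deferred: replacing $r$ by $r'=\min(r,N!m^{n-k})$. Without it you obtain only Eq.~\eqref{eq:multiblock_code_bound_main} with $r$ in place of $r'$, a strictly weaker statement which does not give a level bound independent of the number of codeblocks --- the ingredient needed for the Toffoli no-go corollary. The paper's argument for the sharpening is tied to its sequential construction: whenever two codeblocks $b_1,b_2$ carry the same coset $G^{(b_1)}=G^{(b_2)}$ \emph{and} the same permutation $\sigma_{b_1}=\sigma_{b_2}$, one chooses the same representative on both, so the second contributes nothing new to $H\cap\supp{g}$ and Eq.~\eqref{eq:Hj_size_relation} loses no further factor; by pigeonhole at most $N!m^{n-k}$ codeblocks avoid such a collision, so only $r'$ factors of $(1-1/\disj)$ survive. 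Your product construction can be repaired to capture this, but not via the full Cartesian product: you must pass to a \emph{diagonal} set (the same $g\in A_b$ on every codeblock within a given coset-and-permutation equivalence class) and then take the product over the at most $r'$ classes. As written, your ``every useful configuration has already been realized'' intuition does not by itself control the exponent $|S|$ in $(1-1/\disj)^{|S|}$.
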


We remark that Theorem~\ref{thm:multiblock_code_bound_main} can do more than rule out universal sets of transversal operators. Any set of operators that is capable of bootstrapping itself up the Clifford hierarchy indefinitely also cannot be transversally implemented. A simple example is the Toffoli gate, which can be used to implement an $M$-qubit controlled-$X$ gate $C^{M-1}X$ for any $M$. Since the $C^{M-1}X$ gate is in the $M^{\text{th}}$ level, no stabilizer code can implement the Toffoli gate transversally; see Appendix~\ref{app:multiple_codeblock} for more details. We remark that the same limitation on the transversal Toffoli gate was recently proved for most quantum codes by using entirely different means \cite{Newman2017}.

Finally, a further generalization of Theorem~\ref{thm:finite_code_bound} comes by considering logical operators $K_0 = UP$ that can be written as a product of a transversal operator $U$ and a permutation $P$ of the subsets $Q_i$ (allowing for a different permutation on every  codeblock).
We can similarly restrict such logical operators to the Clifford hierarchy; see Appendix~\ref{app:permuting_transversal}.
However, for $r>1$ these logical operators do not form a group, so there is no obvious analog of the Eastin-Knill theorem \cite{EK09} for them.

\section{Limitations on shallow circuits}\label{sec:bounding_constant_depth_gates}

Our methods are powerful enough to put limitations on transversal as well as shallow-depth circuits which implement logical operators on stabilizer codes with respect to the given qudit partition.
In this section, we find bounds on the level of the Clifford hierarchy achievable by $q$-local circuits of depth $h$ (which may be geometrically non-local).
The key ingredient needed to derive explicit bounds in terms of parameters of the code ($\dmin$, $\dmax$, $\disj$) and of the circuit ($q$, $h$) is the following lemma.

\begin{lem}\label{lem:deep_circ_support_bound}
Let $A$ be a transversal operator and $U$ be a $q$-local circuit of depth $h$. Then,
\begin{equation}\label{eq:deep_circ_support_bound}
|\supp{[U,A]}|\le q^h|\supp{U}\cap\supp{A}|.
\end{equation}
\end{lem}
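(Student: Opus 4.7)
The plan is to first reduce the commutator $[U,A]$ to one in which the entire support of the Pauli-like factor sits inside $\supp{U}$, using transversality of $A$, and then to apply a standard light-cone argument for $q$-local depth-$h$ circuits.

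\textbf{Step 1 (Reduction to $I=\supp{U}\cap\supp{A}$).} Because $A$ is transversal with respect to the fine partition $\{Q_i\}$, I would decompose $A = A_I\,A_{\bar I}$, where $A_I=\bigotimes_{i\in I}A_i$ and $A_{\bar I}=\bigotimes_{i\in\supp{A}\setminus I}A_i$. For every $i\in\supp{A}\setminus I$ we have $i\notin\supp{U}$, so $U$ acts as the identity on $Q_i$ and therefore commutes with $A_i$; hence $U$ commutes with $A_{\bar I}$. Combining this with $[A_I,A_{\bar I}]=I$ (disjoint transversal support), a short manipulation gives
\begin{equation*}
[U,A] \;=\; A_{\bar I}\,[U,A_I]\,A_{\bar I}^{\dagger}.
\end{equation*}
Since $U$ is trivial on every $Q_j$ outside $\supp{U}$, so is $UA_IU^{\dagger}$; together with $\supp{A_I^{\dagger}}\subseteq\supp{U}$ this yields $\supp{[U,A_I]}\subseteq\supp{U}$, which is disjoint from $\supp{A_{\bar I}}$. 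Consequently $A_{\bar I}$ and $[U,A_I]$ commute, and the conjugation above collapses to $[U,A]=[U,A_I]$. Thus it suffices to bound $|\supp{[U,A_I]}|$.

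\textbf{Step 2 (Light-cone bound).} Write $U=U_h U_{h-1}\cdots U_1$, with each $U_j$ transversal with respect to a coarsening $\{R_k^{(j)}\}_k$ of $\{Q_i\}$ whose blocks consist of at most $q$ subsets $Q_i$. Define iteratively $L_0 = I$ and
\begin{equation*}
L_j \;=\; \bigcup\bigl\{R_k^{(j)} : R_k^{(j)}\cap L_{j-1}\neq\emptyset\bigr\}\;\subseteq\;[N].
\end{equation*}
A straightforward induction establishes two facts: (i) because $\{R_k^{(j)}\}$ is a partition of $[N]$, we have $L_{j-1}\subseteq L_j$, and each index in $L_{j-1}$ drags in at most $q$ new ones, so $|L_j|\le q|L_{j-1}|$ and thus $|L_h|\le q^h|I|$; (ii) the Heisenberg-evolved operator $U_j\cdots U_1\,A_I\,U_1^{\dagger}\cdots U_j^{\dagger}$ is supported inside $L_j$, since any block $R_k^{(j)}$ disjoint from $L_{j-1}$ meets no support of the previous layer's operator and contributes nothing. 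Applying (ii) at $j=h$ gives $\supp{UA_IU^{\dagger}}\subseteq L_h$, and combining with $\supp{A_I}=L_0\subseteq L_h$,
\begin{equation*}
|\supp{[U,A]}| \;=\; |\supp{[U,A_I]}| \;\le\; |L_h| \;\le\; q^h\,|I| \;=\; q^h\,|\supp{U}\cap\supp{A}|,
\end{equation*}
which is exactly the claimed inequality.

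\textbf{Main obstacle.} The algebraic reduction in Step 1 is the subtlest part: it relies on the transversality of $A$ together with the observation that conjugation by $U$ cannot enlarge support outside $\supp{U}$, so that the outer $A_{\bar I}$ factors in $A_{\bar I}[U,A_I]A_{\bar I}^{\dagger}$ really do cancel. The light-cone analysis in Step 2 is routine, but requires some care because different layers $U_j$ may be transversal with respect to \emph{different} coarsenings of $\{Q_i\}$, so the sets $L_j$ must be tracked layer by layer.
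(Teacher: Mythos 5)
Your proof is correct and follows essentially the same route as the paper's: both reduce $[U,A]$ to $[U,A_I]$ with $I=\supp{U}\cap\supp{A}$ using the transversality of $A$, and then bound the spread of support by the $q^h$ light cone of a depth-$h$ $q$-local circuit. Your layer-by-layer bookkeeping of the sets $L_j$ is a more careful rendering of the light-cone step than the paper's (which bounds $|\supp{UA_iU^\dagger}|\le q^h$ for each single-subset factor $A_i$ and applies a union bound over $i\in I$), and it also sidesteps the paper's slightly imprecise claim that $\supp{A_i^\dagger}\subseteq\supp{UA_iU^\dagger}$ --- not literally true, e.g., if $U$ swaps $Q_i$ with another subset --- by instead observing that both sets lie inside the light cone $L_h\supseteq L_0$.
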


\begin{proof}
First, we express the transversal operator $A = \prod_{i\in \supp{A}} A_i$ as a product of operators $A_i$, each of which is supported only on one of the qudit subset, i.e., $|\supp{A_i}| = 1$.
Then, $\supp{A^\dag_i} \subseteq \supp{U A_i U^\dag}$ and $|\supp{U A_i U^\dag}| \leq q^h$.
Let $\mathcal{I} = \supp U \cap \supp A$, and then
$[U,A] = \left(\prod_{i\in \mathcal{I}} UA_i U^\dag\right) \prod_{i\in \mathcal{I}} A^\dag_i$.
Note that for any two operators $V$ and $W$ we have $\supp{VW} \subseteq \supp V \cup \supp W$.
Using this fact we get 
\begin{eqnarray}
\supp{[U,A]} &\subseteq& \bigcup_{i\in \mathcal{I}} \supp{UA_i U^\dag} \cup \bigcup_{i\in \mathcal{I}} \supp{A^\dag_i}\\
&=& \bigcup_{i\in \mathcal{I}} \supp{UA_i U^\dag},
\end{eqnarray}
and then using the union bound we arrive at
\begin{eqnarray}
|\supp{[U,A]}| &\leq& \sum_{i\in \mathcal{I}} |\supp{UA_i U^\dag}| \leq |\mathcal{I}| q^h.
\end{eqnarray}
This finishes the proof, since $ |\mathcal{I}| = |\supp U \cap \supp A|$.
\end{proof}

With Lemma~\ref{lem:deep_circ_support_bound}, we update Eq.~\eqref{eq:support_inequality} to read
\begin{align}\label{eq:deep_circ_support_inequality}
|\supp{K_j}|&\le q^{h_{j-1}}|\supp{K_{j-1}}\cap\supp{g_j}|\\
&\le q^{h_{j-1}}|\supp{K_{j-1}}|/\dis{c_j}{G_j},
\end{align}
where $h_{j-1}=2^{j-1}h$ is an upper bound on the depth of $K_{j-1}$.\footnote{Since $K_j=K_{j-1}g_jK_{j-1}^\dag g_j^\dag$ one immediately obtains $h_j\le 2h_{j-1}+2$, where $h_0=h$. However, the transversal operator $g_j$ can be absorbed into neighboring gates in the circuit and, as a result, does not increase to circuit depth. Thus, we can remove the additive constant from the recursion.} 
Accordingly, by repeating the argument recursively, we obtain a version of Theorem~\ref{thm:finite_code_bound} for $q$-local circuits of depth $h$.

\begin{thm}[shallow circuit case]
\label{thm:deep_circ_code_bound}
Consider a stabilizer code with min-distance $d_{\downarrow}$, max-distance $d_{\uparrow}$, and disjointness $\disj$. If $M$ is an integer satisfying
\begin{equation}\label{eq:deep_circ_code_bound}
d_{\uparrow} q^{(2^M-1)h}<d_{\downarrow}\disj^{M-1},
\end{equation}
then all logical operators implemented by $q$-local circuits of depth $h$ are in the $M^{\text{th}}$ level of the hierarchy $\overline C_M$.
\end{thm}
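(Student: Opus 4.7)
The plan is to parallel the proof of Theorem~\ref{thm:finite_code_bound}, replacing the support bound for commutators of transversal operators (Eq.~\eqref{eq:transversal_commutator}) with the more general bound for $q$-local circuits of depth $h$ given by Lemma~\ref{lem:deep_circ_support_bound}. The recursive scheme is identical: let $U$ be any logical operator implementable by a $q$-local circuit of depth $h$, set $K_0 = U$, and define $K_j = [K_{j-1}, g_j]$ for a sequence of logical Pauli representatives $g_j \in G_j \in \mathcal{L}$. Choose $g_1$ to be a minimal-support representative of $G_1$, so that $|\supp{g_1}| = d(G_1) \le \dmax$, and for $j \ge 2$ choose $g_j$ via Lemma~\ref{lem:scrubbing} applied to the set $H = \supp{K_{j-1}}$, using the optimal $c_j = \operatorname{argmax}_{c \ge 1}\dis{c}{G_j}$.

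Next I would bookkeep the depth. Each $K_j$ is itself a circuit, and since a group commutator doubles the depth (with the outer layers of the Pauli $g_j$ absorbable into adjacent gates as noted in the footnote to Eq.~\eqref{eq:deep_circ_support_inequality}), we get $h_j \le 2^j h$. Then Lemma~\ref{lem:deep_circ_support_bound} applied to $K_j = [K_{j-1}, g_j]$ (where $g_j$ is transversal and $K_{j-1}$ is a $q$-local depth-$h_{j-1}$ circuit) yields
\begin{equation}
|\supp{K_j}| \le q^{h_{j-1}}|\supp{K_{j-1}} \cap \supp{g_j}| \le q^{h_{j-1}}|\supp{K_{j-1}}|/\dis{c_j}{G_j},
\end{equation}
where the second inequality uses the scrubbing lemma. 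For $j = 1$ we instead use $|\supp{K_1}| \le q^h |\supp{g_1}| \le q^h \dmax$ directly, since there is no prior $K_0$-support bound in terms of logical distance.

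Iterating the recursion over $j = 1, \dots, M$ and using $\disj \ge \dis{c_j}{G_j}$ gives
\begin{equation}
|\supp{K_M}| \le \dmax \, q^{h + h_1 + h_2 + \cdots + h_{M-1}} / \disj^{M-1} = \dmax \, q^{(2^M - 1)h}/\disj^{M-1},
\end{equation}
since $\sum_{j=0}^{M-1} 2^j = 2^M - 1$. Under the hypothesis \eqref{eq:deep_circ_code_bound}, this quantity is strictly less than $\dmin$, so $K_M$ has support smaller than the min-distance and hence is trivial. By the recursive characterization \eqref{eq:hiearchy_alt} of the Clifford hierarchy, $K_{M-j} \in \overline{C}_j$ for each $j$, and in particular $K_0 = U \in \overline{C}_M$.

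The only nontrivial step beyond routine bookkeeping is the depth accounting: one must verify that the outer Pauli layers in $K_j = K_{j-1} g_j K_{j-1}^\dagger g_j^\dagger$ can indeed be merged into neighboring gates of the circuit implementing $K_{j-1}$ so that the depth doubles rather than doubles-plus-two; this is precisely the content of the footnote in the preceding paragraph. Once that is in hand, the geometric-series exponent $2^M - 1$ appears automatically, and the rest of the argument is a direct transcription of the transversal-case proof.
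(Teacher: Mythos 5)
Your proof is correct and follows essentially the same route as the paper: replace the transversal commutator support bound with Lemma~\ref{lem:deep_circ_support_bound}, track the depth as $h_{j-1}=2^{j-1}h$ (absorbing the Pauli layers as in the footnote), and iterate the scrubbing recursion to get $|\supp{K_M}|\le \dmax\, q^{(2^M-1)h}/\disj^{M-1}<\dmin$. The depth bookkeeping and the geometric-series exponent match the paper's argument exactly.
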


We remark that, unlike in Theorem~\ref{thm:finite_code_bound} for transversal operators, there is no guarantee that there exists $M$ satisfying Eq.~\eqref{eq:deep_circ_code_bound} for $q>1$.
Nevertheless, the shallow circuit version Theorem~\ref{thm:deep_circ_code_bound} is still useful for bounding logical gates on code \emph{families} in the asymptotic limit.
Namely, consider a family of codes $\llbracket n(l),k(l)\rrbracket$ with parameters $\dmin(l)$, $\dmax(l)$ and $\disj(l)$ with respect to some qudit partitions, parametrized by a positive integer $l$. We say that the code family has a $q$-local logical gate of depth $h$ if there exists a constant $l_0$ such that for all $l\ge l_0$ one can implement the logical gate in the corresponding codes with some $q$-local circuits of depth $h$. To rule out logical gates from outside the $M^{\text{th}}$ level of the hierarchy $\overline C_M$ with constant depth $h=h(l)$ and constant locality $q=q(l)$, it is therefore sufficient to consider the limit of Eq.~\eqref{eq:deep_circ_code_bound}. We arrive at the following corollary.
\begin{cor}\label{cor:asymptotic_code_bound}
If for a family of stabilizer codes $\llbracket n(l),k(l)\rrbracket$ with parameters $\dmin(l)$, $\dmax(l)$ and $\disj(l)$
there exists an integer $M$ satisfying 
\begin{equation}
\lim_{l\rightarrow\infty}\frac{\dmax(l)}{\dmin(l)\disj(l)^{M-1}}=0,
\end{equation}
then for any constants $q$ and $h$ all $q$-local logical gates of depth $h$ are in the $M^{\text{th}}$ level of the hierarchy $\overline C_M$.
\end{cor}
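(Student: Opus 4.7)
The plan is to derive Corollary~\ref{cor:asymptotic_code_bound} as a direct asymptotic consequence of Theorem~\ref{thm:deep_circ_code_bound}. The strategy is to apply the theorem separately to each member of the code family, and use the hypothesized vanishing limit to verify its sufficient condition for all sufficiently large $l$.

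First, I would rewrite the inequality $d_{\uparrow}\, q^{(2^M-1)h}<d_{\downarrow}\,\disj^{M-1}$ from Theorem~\ref{thm:deep_circ_code_bound}, applied to the $l$-th code of the family, in the equivalent ratio form
\begin{equation}
\frac{\dmax(l)}{\dmin(l)\,\disj(l)^{M-1}}<\frac{1}{q^{(2^M-1)h}}.
\end{equation}
Because $q$, $h$, and $M$ are all constants independent of $l$, the right-hand side is a strictly positive constant. The hypothesis $\lim_{l\to\infty}\dmax(l)/\bigl(\dmin(l)\,\disj(l)^{M-1}\bigr)=0$ then supplies a threshold $l_1$ such that the displayed inequality holds for every $l\ge l_1$, and for each such $l$ Theorem~\ref{thm:deep_circ_code_bound} certifies that every $q$-local circuit of depth $h$ on the $\llbracket n(l),k(l)\rrbracket$ code implements a logical operator in $\overline C_M$.

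Finally, I would unfold the definition of a code family possessing a $q$-local logical gate of depth $h$: the implementation of the gate is assumed to exist for all $l\ge l_0$ for some constant $l_0$. Choosing $l\ge\max(l_0,l_1)$, the implementation exists and is guaranteed to belong to $\overline C_M$, which is exactly the conclusion of the corollary. There is no substantive obstacle here — the proof is essentially bookkeeping on top of Theorem~\ref{thm:deep_circ_code_bound}. The only care required is to observe that once $q$, $h$, and $M$ are frozen at $l$-independent values, $1/q^{(2^M-1)h}$ becomes a fixed positive constant that every vanishing sequence must eventually fall below; if $M$ were allowed to depend on $l$ the argument would break, but the notion of a constant-depth circuit implementing a fixed logical gate naturally pins $M$ to a single $l$-independent value.
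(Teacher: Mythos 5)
Your proposal is correct and follows exactly the paper's own reasoning: the paper likewise obtains the corollary by taking the limit of the condition in Theorem~\ref{thm:deep_circ_code_bound}, noting that for constant $q$, $h$, and $M$ the factor $q^{(2^M-1)h}$ is a fixed constant that a vanishing ratio must eventually fall below. Your added bookkeeping with the thresholds $l_0$ and $l_1$ is a faithful elaboration of the same argument, including the observation (made explicitly in the paper) that requiring the limit to vanish is precisely what lets one absorb the constant locality and depth factors.
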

\noindent We require the limit vanish with $l$ (rather than, say, just being less than $1$) so that we can ignore the factors of constant locality and depth that appear in Eq.~\eqref{eq:deep_circ_code_bound}.

We conclude this section with a few examples illustrating the usefulness of Corollary~\ref{cor:asymptotic_code_bound}.
\begin{example}\label{ex:surface_code_2}
Consider the family of surface codes on square lattices of size $l\times l$.
As shown in Example~\ref{ex:surface_code}, the code parameters are $\dmin(l)=l$, $\dmax(l)=2l-1$, and $\disj(l)\ge l/2$.
Since for $M>1$ we have 
\begin{equation}
0 \leq \frac{\dmax(l)}{\dmin(l)\disj(l)^{M-1}}\leq 2^{M-1}\frac{2l-1}{l^M} \xrightarrow[l\rightarrow\infty]{} 0,
\end{equation}
thus constant-depth, constant-locality circuits on surface codes can only implement logical Clifford gates.
\end{example}
\noindent Surprisingly, asymmetric 2D codes can have transversal logical non-Clifford gates.
For instance, asymmetric Bacon-Shor codes have the transversal logical $CCZ$ gate \cite{Yoder2017}.
We emphasize that the asymmetry in the weight of different logical Pauli operators affects the ability to bound logical gates.

\begin{example}\label{ex:bacon_shor}
Consider the stabilizer code family of asymmetric Bacon-Shor codes in the $Z$-gauge on square lattices $l\times l^a$, $a\ge1$.
The code parameters $\dmin(l)=l$ and $\dmax(l)=l^a+l-1$ are asymptotically different. Similarly to Example~\ref{ex:surface_code}, we find $\disj(l)\ge l/2$. For $M>a$ we have
\begin{equation}
0 \leq \frac{\dmax(l)}{\dmin(l)\disj(l)^{M-1}}\le 2^{M-1}\frac{l^a+l-1}{l^M} \xrightarrow[l\rightarrow\infty]{} 0,
\end{equation}
and thus constant-depth, constant-locality logical circuits on asymmetric Bacon-Shor  codes are restricted to the ${(\lfloor a \rfloor +1)}^{\text{th}}$ level of the hierarchy $\overline C_{\lfloor a\rfloor +1}$.
\end{example}
\noindent The multi-block versions of the asymptotic arguments (taking the limit of Eq.~\eqref{eq:multiblock_code_bound_main}) in these two examples yield the same bounds.
 
One can also generalize Example~\ref{ex:surface_code_2} to other topological codes that are equivalent to the $D$-dimensional toric code, such as the color code \cite{Kubica2015b}. Choose logical Pauli $X$ and $Z$ operators to have representatives of dimensionality $D-s$ and $s$, where $1\leq s \leq \lfloor D/2 \rfloor$. Then, given linear lattice size $O(l)$, the code parameters are $\dmin = O(l^s)$, $\dmax = O(l^{D-s})$, $\disj = O(l^s)$, and thus from Corollary~\ref{cor:asymptotic_code_bound} their logical gates implemented via constant-depth (possibly geometrically non-local) circuits are limited to the $M^{\text{th}}$ level of the Clifford hierarchy, where $M=\lfloor(D-s)/s\rfloor+1$.
Note that as in Example~\ref{ex:bacon_shor}, the greater the asymmetry of the support of the logical operators (or, in other words, the difference in the dimensionality of those operators), the higher the level of the Clifford hierarchy that is accessible.
It is unclear though how to bound disjointness on more exotic topological codes with fractal-like logical operators, such as Haah's cubic code \cite{Haah2011}.

\section{Discussion}

We have provided explicit upper bounds on the level of the Clifford hierarchy that is accessible for logical operators on any stabilizer code, which are implemented by transversal and constant-depth circuits.
We expect our techniques to apply similarly to stabilizer codes composed of qudits, which differ in local dimension. 
As long as stabilizers and Pauli logical operators are tensor products of Pauli operators on physical qudits, presented results and proofs should carry through.

We remark that in the proof of Theorem~\ref{thm:finite_code_bound} instead of Lemma~\ref{lem:scrubbing} we could use the following simple corollary of the Cleaning Lemma~\ref{lem:cleaning}: for any non-trivial logical operator $G\in\mathcal{L}$ and a collection $H\subseteq [N]$ of qudit subsets $Q_i$ satisfying $|H|\le\dmin-1$, one can find a representative $g\in G$ such that $|H\cap\supp{g}|\le|H|-(\dmin-1)$. 
We follow the same recursive reduction of support of $K_j$ as in Theorem~\ref{thm:finite_code_bound} and obtain that if $M$ is an integer satisfying
\begin{equation}\label{eq:cleaning_bound}
\dmax<\dmin+(M-1)(\dmin-1),
\end{equation}
then all transversal logical gates are in the $M^{\text{th}}$ level of the Clifford hierarchy.
Such an integer always exists if the stabilizer code is error-detecting, i.e., $\dmin > 1$.
We note, that the bound on $M$ from Eq.~\eqref{eq:cleaning_bound} is rather loose.
In particular, transversal logical gates on asymmetric Bacon-Shor codes of size $O(l)\times O(l^2)$ are only restricted to the $O(l)^{\text{th}}$ level, which is not useful for large $l$.
On the other hand, Theorem~\ref{thm:finite_code_bound} limits the gates to the third level, which is indeed accessible in this code family, as we have seen in Example~\ref{ex:bacon_shor}. However, a strengthening of the bound Eq.~\eqref{eq:cleaning_bound} can be achieved by using any $M+1$ cleanable regions, each of which could potentially be supported on more than $\dmin-1$ qudits \cite{BeverlandPreskill}.

While our main results are derived without assumptions of geometric locality, we can derive even stronger bounds by assuming geometric locality of the circuits. For instance, $D$-dimensional surface codes (encoding a single logical qubit) cannot implement non-Clifford logical operators with geometrically local, constant-depth circuits. The argument follows exactly the same lines as that in Section~\ref{sec:intuition}, relying essentially on the ability to choose representatives $g_1,g_2$ of any two logical Paulis such that $|\supp{g_1}\cap\supp{g_2}|=O(1)$. Since a geometrically-local circuit $U$ cannot greatly distort the support of these representatives $|\supp{[[U,g_1],g_2]}|=O(1)$ as well. Note, however, that this argument breaks for geometrically local circuits that operate instead on several superimposed $D$-dimensional surface codes~\cite{Kubica2015b}, while Bravyi and K{\"o}nig's theorem would still hold in this case.

The notion of disjointness for stabilizer codes, which we introduced, appears to be difficult to calculate exactly.
If stabilizer codes have some underlying structure, as Reed-Muller codes in Example~\ref{ex:q_reed_muller} or topological codes in Examples~\ref{ex:surface_code_2}~and~\ref{ex:bacon_shor}, then we can find bounds on the disjointness, and this usually suffices to establish limits on the accessible level of the Clifford hierarchy.
We believe that it is a challenging open problem to find efficient methods to compute (or approximate) the disjointness for an arbitrary stabilizer code.
This problem, however, might be substantially simpler for topological codes, where one could exploit code and lattice symmetries.
Also, it would be interesting to extend the notion of disjointness to the subsystem codes and find possible relations to other new stabilizer code quantities, such as the price \cite{Pastawski2017}.

\section*{Acknowledgements}

The authors would like to thank Ben Brown, Steve Flammia and Daniel Gottesman for helpful discussions. In particular, we would like to thank Michael Beverland for comments on the manuscript and who also showed the transversal gates of all stabilizer and subsystem codes are restricted to the Clifford hierarchy in unpublished work with John Preskill~\cite{BeverlandPreskill}.
TJ acknowledges the support from the Walter Burke Institute
for Theoretical Physics in the form of the Sherman
Fairchild Fellowship.
AK acknowledges funding provided by the Simons Foundation through the ``It from Qubit'' Collaboration, as well as by the Institute for Quantum Information and Matter, an NSF Physics Frontiers Center (NFS Grant PHY-1125565) with support of the Gordon and Betty Moore Foundation (GBMF-12500028).
Research at Perimeter Institute is supported by the Government of Canada
through Industry Canada and by the Province of Ontario through the Ministry of Research and Innovation.
TY is grateful for support from the Department of Defense (DoD) through the National Defense Science and Engineering Graduate (NDSEG) Fellowship program and also an IBM PhD Fellowship award.

\appendix

\section{Logical gates on multiple codeblocks}\label{app:multiple_codeblock}
In this section, we describe how to restrict gates on multiple codeblocks to the Clifford hierarchy. We flesh out the argument in the main text by giving formulas for $d_{\downarrow},d_{\uparrow},\disj$ of the multi-codeblock code in terms of those for the the single codeblock. Then we argue how the bound on level of the Clifford hierarchy obtainable by transversal gates can be made independent of the number of codeblocks.

Consider $r$ codeblocks of the same\footnote{Taking the blocks to be the same is for simplification of the argument only. For instance, we do not have to deal with different quantities $d_{\downarrow},d_{\uparrow},\disj$ for each code. Running through a more general argument where the stabilizer codes are allowed to be different is possible, and similarly restricts transversal gates to the Clifford hierarchy.} $\llbracket n,k\rrbracket$ base stabilizer code, each with identical\footnote{Again, the sameness of the partitions can be relaxed at the cost of notational encumbrance.} qudit partitions, which we write as $\{Q^{(b)}_i\}$, where superscript $b=1,2,\dots,r$ represents the codeblock. Like in the main text, we say there are $N$ subsets $Q^{(b)}_i$ for each $b$, and we use $d_{\downarrow},d_{\uparrow},\disj$ to denote the quantities of the base code.

The effective stabilizer code is formed by treating all $r$ codeblocks as a $\llbracket rn,rk\rrbracket$ stabilizer code, and the qudits of the effective code can be partitioned into subsets $\{Q_i\}$, each consisting of one subset from each of the $r$ codeblocks,
\begin{equation}
Q_i=\bigcup_{b=1}^{r}Q_{\sigma_b(i)}^{(b)}.
\end{equation}
Here $\sigma_b:[N]\rightarrow[N]$ is an (arbitrary) permutation of the partitions of codeblock $b$. This completes the partitioning of the effective code in such a way that the effective code's min-distance  equals that of the base code, $d_{\downarrow,\text{eff}}=d_{\downarrow}$. We also note the simple bound on the effective code's max-distance $d_{\uparrow,\text{eff}}\le rd_{\uparrow}$.

The final quantity to address is the disjointness of the effective code $\disj_{\text{eff}}$. To do this, we prove a more general version of Lemma~\ref{lem:scrubbing} for multiple codeblocks, and let this inform the definition of $\disj_{\text{eff}}$. Start by establishing some notation. Let $\mathcal{L}_{\text{eff}}$ denote the set of nontrivial logical cosets of the effective code. Note that $G\in\mathcal{L}_{\text{eff}}$ means, by definition, that
\begin{equation}\label{eq:G_decomp}
G=\bigotimes_{b=1}^rG^{(b)},
\end{equation}
where $G^{(b)}\in\mathcal{L}\cup\{\mathcal{S}\}$ are logical cosets of the base code and at least one is nontrivial (i.e.~in $\mathcal{L}$).
\begin{lem}\label{lem:multiblock_scrubbing}
Let $G\in\mathcal{L}_{\text{eff}}$ and $H\subseteq[N]$. Then, for any $c_1,c_2,\dots,c_r$, there exists a representative $g\in G$ such that
\begin{equation}\label{eq:multiblock_scrubbing}
|\supp{g}\cap H|\le\left(1-\prod_{b}\left(1-\dis{c_b}{G^{(b)}}^{-1}\right)\right)|H|,
\end{equation}
where the product ranges only over nontrivial cosets in the decomposition of $G$, Eq.~\eqref{eq:G_decomp}.
\end{lem}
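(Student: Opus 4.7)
The approach is to iterate Lemma~\ref{lem:scrubbing} codeblock by codeblock, greedily choosing each local representative $g^{(b)} \in G^{(b)}$ so that its support hits the portion of $H$ not yet covered by earlier codeblocks. Since every representative of $G$ factorizes as $g = \bigotimes_b g^{(b)}$, and since an effective partition index $i \in H$ lies in $\supp{g}$ exactly when $\sigma_b(i) \in \supp{g^{(b)}}$ for some $b$, we have $\supp{g} \cap H = \bigcup_b T_b$ with $T_b = \{i \in H : \sigma_b(i) \in \supp{g^{(b)}}\}$. The plan is to control the complement $H \setminus \bigcup_b T_b$ via a simple induction on $b$.

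For any codeblock $b$ with $G^{(b)}$ trivial I take $g^{(b)} = I$, making $T_b = \emptyset$; this is precisely why such codeblocks are absent from the product in Eq.~\eqref{eq:multiblock_scrubbing}. For the nontrivial codeblocks, ordered arbitrarily as $b = 1, 2, \dots$, set $U_0 = H$ and $U_b = U_{b-1} \setminus T_b$. At step $b$ I apply Lemma~\ref{lem:scrubbing} inside codeblock $b$, with test set $\sigma_b(U_{b-1}) \subseteq [N]$ and parameter $c_b$ supplied by the hypothesis, producing $g^{(b)} \in G^{(b)}$ satisfying
\begin{equation}
|\supp{g^{(b)}} \cap \sigma_b(U_{b-1})| \le |U_{b-1}| / \dis{c_b}{G^{(b)}}.
\end{equation}
Because $\sigma_b$ is a bijection the left-hand side equals $|T_b \cap U_{b-1}|$, so
\begin{equation}
|U_b| \ge |U_{b-1}|\bigl(1 - \dis{c_b}{G^{(b)}}^{-1}\bigr).
\end{equation}

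Iterating this bound over the nontrivial codeblocks gives $|U_r| \ge |H| \prod_b \bigl(1 - \dis{c_b}{G^{(b)}}^{-1}\bigr)$, and the conclusion $|\supp{g} \cap H| = |H| - |U_r| \le |H|\bigl(1 - \prod_b (1 - \dis{c_b}{G^{(b)}}^{-1})\bigr)$ of \eq{multiblock_scrubbing} follows by rearrangement. The main delicacy is conceptual rather than technical: one must be careful that the test set used at step $b$ is the image $\sigma_b(U_{b-1})$ of the still-uncovered portion of $H$, not $U_{b-1}$ itself, and one must observe that the greedy choices are legitimate because a representative of $G$ is obtained by picking a representative of each $G^{(b)}$ independently. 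Once this bookkeeping is in place, the induction is routine and mirrors the standard "union-bound on the complement" argument.
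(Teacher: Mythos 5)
Your proposal is correct and follows essentially the same route as the paper: a greedy, codeblock-by-codeblock application of Lemma~\ref{lem:scrubbing} to the still-uncovered portion of $H$, with the product bound obtained by induction on the complement. The only difference is cosmetic --- you make the reindexing via the permutations $\sigma_b$ explicit, which the paper leaves implicit.
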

\begin{proof}
Without loss of generality we say that only the first $r_0\le r$ cosets in Eq.~\eqref{eq:G_decomp} are nontrivial. We decompose $g\in G$ as $g=\bigotimes_{b=1}^rg^{(b)}$ with $g^{(b)}\in G^{(b)}$. Our task is to find $g^{(b)}$ such that Eq.~\eqref{eq:multiblock_scrubbing} holds. Start by noting
\begin{align}
\supp{g}&=\bigcup_{b=1}^{r_0}\supp{g^{(b)}}\\
H\cap\supp{g}&=\bigcup_{b=1}^{r_0}\left(H\cap\supp{g^{(b)}}\right).
\end{align}
Say that we have already chosen $g^{(1)},g^{(2)},\dots,g^{(j-1)}$. Then we need only minimize the intersection of $g^{(j)}$ with
\begin{equation}
H_{j-1}:=H-\bigcup_{b=1}^{j-1}\left(H\cap\supp{g^{(b)}}\right),
\end{equation}
the set of partitions in $H$ that are yet unaffected. By Lemma~\ref{lem:scrubbing} we can find $g^{(j)}\in G^{(j)}$ such that
\begin{equation}\label{eq:apply_scrubbing}
|\supp{g^{(j)}}\cap H_{j-1}|\le|H_{j-1}|/\dis{c_j}{G^{(j)}}.
\end{equation}
Note the relations
\begin{align}
H_0&=H,\\
H_j&=H_{j-1}-(\supp{g^{(j)}}\cap H_{j-1}),\\
H-H_{r_0}&=H\cap\supp{g}.
\end{align}
Thus, Eq.~\eqref{eq:apply_scrubbing} implies
\begin{align}\label{eq:Hj_size_relation}
|H_j|&=|H_{j-1}-(\supp{g^{(j)}}\cap H_{j-1})|\\\nonumber
&\ge\left(1-\dis{c_j}{G^{(j)}}^{-1}\right)|H_{j-1}|.
\end{align}
Repetitive use of Eq.~\eqref{eq:Hj_size_relation} gives us the bound
\begin{equation}
|H_{r_0}|\ge\prod_{b=1}^{r_0}\left(1-\dis{c_b}{G^{(b)}}^{-1}\right)|H|
\end{equation}
from which we conclude
\begin{align}
|H-H_{r_0}|&=|H\cap\supp{g}|\\\nonumber
&\le\left(1-\prod_{b=1}^{r_0}\left(1-\dis{c_b}{G^{(b)}}^{-1}\right)\right)|H|.
\end{align}
This completes the proof.
\end{proof}

We can simplify Eq.~\eqref{eq:multiblock_scrubbing} by choosing specific $c_b$ such that $\disj \le \dis{c_b}{G^{(b)}}$ and find a $g\in G$ such that
\begin{align}
|H\cap\supp{g}|&\le(1-(1-1/\disj)^{r_0})|H|\\\label{eq:multiblock_scrubbing_with_tilde_d}
&\le (1-(1-1/\disj)^{r})|H|.
\end{align}
The latter form of the right-hand side implies that defining
\begin{equation}
\disj_{\text{eff}}:=\frac{1}{1-(1-1/\disj)^r}
\end{equation}
will result in a theorem analogous to Theorem~\ref{thm:finite_code_bound}, but for multiple codeblocks.
\begin{thm}\label{thm:multiblock_code_bound}
If $d_{\uparrow,\text{eff}}<d_{\downarrow,\text{eff}}\disj_{\text{eff}}^{M-1}$, then all transversal gates on $r$ codeblocks are in $\overline C_M$.
\end{thm}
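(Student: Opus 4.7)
The plan is to mimic the proof of Theorem~\ref{thm:finite_code_bound} verbatim, with the single-codeblock scrubbing lemma replaced by its multi-codeblock counterpart Lemma~\ref{lem:multiblock_scrubbing}, and with the effective-code parameters $d_{\downarrow,\text{eff}}$, $d_{\uparrow,\text{eff}}$, $\disj_{\text{eff}}$ in place of $\dmin$, $\dmax$, $\disj$. Given a transversal operator $K_0$ on $r$ codeblocks and any sequence of non-trivial logical Paulis $G_1,G_2,\ldots\in\mathcal{L}_{\text{eff}}$, I would build representatives $g_j\in G_j$ recursively and form the iterated group commutators $K_j=[K_{j-1},g_j]$. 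Since the group commutator of two transversal operators is again transversal, each $K_j$ remains transversal, and Eq.~\eqref{eq:transversal_commutator} applies throughout.

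For the base case, take $g_1\in G_1$ of minimal support so that $|\supp{K_1}|\le|\supp{g_1}|\le d_{\uparrow,\text{eff}}$. For $j\ge2$, apply Lemma~\ref{lem:multiblock_scrubbing} with $H=\supp{K_{j-1}}$: by invoking the specialized form Eq.~\eqref{eq:multiblock_scrubbing_with_tilde_d}, which follows by picking $c_b=\text{argmax}_{c\ge1}\dis{c}{G^{(b)}_j}$ for each non-trivial block so that $\dis{c_b}{G^{(b)}_j}\ge\disj$, one obtains a representative $g_j\in G_j$ with $|\supp{g_j}\cap\supp{K_{j-1}}|\le|\supp{K_{j-1}}|/\disj_{\text{eff}}$. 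Combining with Eq.~\eqref{eq:transversal_commutator} yields the recursion $|\supp{K_j}|\le|\supp{K_{j-1}}|/\disj_{\text{eff}}$.

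Iterating $M-1$ times gives $|\supp{K_M}|\le d_{\uparrow,\text{eff}}/\disj_{\text{eff}}^{M-1}<d_{\downarrow,\text{eff}}$ by the hypothesis of the theorem. Since $K_M$ is a logical operator whose support is strictly smaller than the min-distance of the effective code, it must be a trivial logical operator for every choice of $G_1,\ldots,G_M\in\mathcal{L}_{\text{eff}}$. Unwinding the definition of the Clifford hierarchy, the fact that $[K_{M-1},g_M]$ is trivial for every logical Pauli representative $g_M$ forces $K_{M-1}\in\overline{C}_1$, and inductively $K_{M-j}\in\overline{C}_j$, so that $K_0\in\overline{C}_M$ as claimed.

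The conceptual work is really already done in Lemma~\ref{lem:multiblock_scrubbing}; the only subtle point is that the definition $\disj_{\text{eff}}=1/(1-(1-1/\disj)^r)$ has been rigged precisely so that Eq.~\eqref{eq:multiblock_scrubbing_with_tilde_d} produces a uniform scrubbing factor valid for every $G\in\mathcal{L}_{\text{eff}}$, regardless of how many of the $r$ component cosets $G^{(b)}$ are non-trivial (the worst case being when all $r$ are non-trivial). With that choice in hand, the main obstacle is merely bookkeeping: verifying that the transversal property is preserved under the nested commutators and that Lemma~\ref{lem:multiblock_scrubbing} may be applied at every step with the support set $H=\supp{K_{j-1}}$, both of which follow immediately from the definitions.
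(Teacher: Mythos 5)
Your proposal is correct and follows exactly the route the paper takes: the paper's own proof is a one-line remark that Theorem~\ref{thm:multiblock_code_bound} "follows the same lines as the proof of Theorem~\ref{thm:finite_code_bound}, but using Eq.~\eqref{eq:multiblock_scrubbing_with_tilde_d} in place of Lemma~\ref{lem:scrubbing}," which is precisely the substitution you carry out. Your write-up simply makes explicit the recursion $|\supp{K_j}|\le|\supp{K_{j-1}}|/\disj_{\text{eff}}$ and the base case $|\supp{K_1}|\le d_{\uparrow,\text{eff}}$ that the paper leaves implicit.
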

\begin{proof}
This follows the same lines as the proof of Theorem~\ref{thm:finite_code_bound}, but using Eq.~\eqref{eq:multiblock_scrubbing_with_tilde_d} in place of Lemma~\ref{lem:scrubbing}.
\end{proof}
\noindent Of course, we can write the condition of Theorem~\ref{thm:multiblock_code_bound} solely in terms of the single codeblock parameters $d_{\downarrow},d_{\uparrow},\disj$. That is, if
\begin{equation}\label{eq:multiblock_condition_in_terms_of_single_block}
rd_{\uparrow}<d_{\downarrow}\frac{1}{\left(1-(1-1/\disj)^r\right)^{M-1}},
\end{equation}
then all transversal gates are in $\mathcal{C}_M$. Since $\disj>1$ for error-detecting stabilizer codes (Lemma~\ref{lem:properties_of_disjointness}(ii)), $\disj_{\text{eff}}>1$ as well, and the right-hand side of Eq.~\eqref{eq:multiblock_condition_in_terms_of_single_block} must exceed the left for some sufficiently large $M\ge M_0$.

However, given only the arguments until now, it is still possible that $M_0$ depends on $r$ and even that increasing $r$ arbitrarily can increase $M_0$ arbitrarily as well. This would imply that high-level transversal gates between different codeblocks are easier to find than transversal single block gates. While this may be true to some extent there is a limit, which we describe now.

The key is to realize in what instances we can find $g^{(j)}$ so that $|H_j|=|H_{j-1}|$ in Eq.~\eqref{eq:Hj_size_relation}. This happens when we can choose $g^{(j)}$ so that
\begin{equation}
H\cap\supp{g^{(j)}}\subseteq\bigcup_{b=1}^{j-1}H\cap\supp{g^{(b)}}.
\end{equation}
For instance, in the simple case when $\sigma_b$ are each the identity permutation, then whenever $G^{(b_1)}=G^{(b_2)}=G'$, we might as well choose the same representative $g'\in G'$ for both $g^{(b_1)}$ and $g^{(b_2)}$ because then
\begin{equation}
H\cap\supp{g^{(b_1)}}=H\cap\supp{g^{(b_2)}}.
\end{equation}
Moreover, we are guaranteed to start repeating cosets in the decomposition Eq.~\eqref{eq:G_decomp} when $r>m^{n-k}$, so effectively we can replace $r$ in Eq.~\eqref{eq:multiblock_condition_in_terms_of_single_block} with $\min(r,m^{n-k})$, thus achieving an $r$-independent bound.

When $\sigma_b$ is arbitrary, we can make the same argument when $G^{(b_1)}=G^{(b_2)}$ \emph{and} $\sigma_{b_1}=\sigma_{b_2}$. Since there are finitely many permutations as well, we can replace $r$ in Eq.~\eqref{eq:multiblock_condition_in_terms_of_single_block} with $\min(r,N!m^{n-k})$, which admittedly is large but at least finite. The previous arguments complete the proof of Theorem~\ref{thm:multiblock_code_bound_main}.

The upshot of these finite bounds on $M_0$ is that we can state further no-go theorems on what particular gates can be implemented on stabilizer codes. For instance,
\begin{cor}
No error-detecting stabilizer code (on qubits) can implement Toffoli transversally.
\end{cor}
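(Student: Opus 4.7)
The plan is a proof by contradiction exploiting the fact that the Toffoli gate $C^2X$ can be used, together with ancillas initialized in $|0\rangle$, to implement the $M$-qubit controlled-$X$ gate $C^{M-1}X$ for arbitrarily large $M$, while Theorem~\ref{thm:multiblock_code_bound_main} provides a finite bound on the Clifford-hierarchy level reachable by any transversal gate on any number of codeblocks of a given error-detecting code. Suppose then that some error-detecting $\llbracket n,k\rrbracket$ qubit code, with partition into $N$ subsets $Q_i$, admits a transversal logical Toffoli on three codeblocks. Since $r'=\min(r,N!\cdot 2^{n-k})$ is bounded independently of $r$, Theorem~\ref{thm:multiblock_code_bound_main} yields a finite integer $M_0$, depending only on the base-code parameters $\dmin,\dmax,\disj$, such that every transversal logical unitary on \emph{any} number of codeblocks lies in $\overline C_{M_0}$.

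For each $M \ge 3$, I would then construct a transversal implementation of $C^{M-1}X$ using the standard ancilla cascade: take $M$ data codeblocks, encoding $M-1$ controls and one target, together with $M-3$ ancilla codeblocks each prepared in logical $|0\rangle$; apply $M-3$ Toffolis to compute the iterated AND of the controls into successive ancillas; apply one Toffoli that flips the target conditional on the final ancilla and the last control; and apply $M-3$ more Toffolis in reverse order to uncompute the ancillas back to $|0\rangle$. Each Toffoli acts transversally on its triple of codeblocks and extends by identity on the other codeblocks to a transversal operator on the joint partition of all $2M-3$ codeblocks. Since transversality with respect to a fixed partition is preserved under composition, the overall unitary $U$ is transversal, and therefore $U\in\overline C_{M_0}$.

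The next step is to pass from $U$ to $C^{M-1}X$ alone. By construction, $U$ preserves the ancilla subspace $\mathcal{H}_D\otimes|0\rangle_A^{\otimes(M-3)}$ and restricts on it to $C^{M-1}X\otimes I_A$. I would establish a short restriction lemma: if $V\in C_L$ on a bipartite system and $V$ preserves the subspace $\mathcal{H}_D\otimes|0\rangle_A$, then the unitary $V_D$ defined by $V(|\psi\rangle\otimes|0\rangle_A)=(V_D|\psi\rangle)\otimes|0\rangle_A$ also lies in $C_L$. The proof is by induction on $L$: for any data Pauli $\bar P_D$, a direct calculation gives $[V,\bar P_D\otimes I_A](|\psi\rangle\otimes|0\rangle_A)=([V_D,\bar P_D]|\psi\rangle)\otimes|0\rangle_A$, and one checks that $[V,\bar P_D\otimes I_A]$ again preserves the subspace, so the inductive hypothesis applies to it. Applying the lemma to our $U$ yields $C^{M-1}X\in\overline C_{M_0}$.

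Finally, an elementary iteration of the identity $[C^{M-1}X,X_1]=C^{M-2}X$ on the remaining $M-1$ qubits reduces $C^{M-1}X$ down to $C^2X$, the Toffoli, which is not Clifford, so $C^{M-1}X\in\overline C_M\setminus\overline C_{M-1}$ for every $M\ge 3$. Choosing any $M>M_0$ then contradicts the conclusion of the previous paragraph, so the assumed transversal Toffoli cannot exist. I expect the main technical obstacle to be the clean statement and inductive proof of the restriction lemma, since the preserved-subspace condition has to be threaded carefully through the Pauli-commutator definition of the Clifford hierarchy; by contrast, the cascade construction, the transversality of the composition, and the final hierarchy-level count are all standard.
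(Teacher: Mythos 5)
Your argument is correct and follows essentially the same route as the paper: build $C^{M-1}X$ from Toffolis via the standard ancilla cascade, observe that $C^{M-1}X$ lies strictly in the $M^{\text{th}}$ level, and contradict the $r$-independent finite bound supplied by Theorem~\ref{thm:multiblock_code_bound_main}. Your restriction lemma is a careful treatment of a point the paper leaves implicit (passing from the full cascade unitary to $C^{M-1}X$ on the ancilla-$\ket{\bar 0}$ subspace); an alternative that avoids it is to note that the full cascade unitary is itself a product of commuting multiply-controlled-$X$ gates with the same target, the largest being $C^{M-1}X$, so it already lies in $\overline C_M\setminus\overline C_{M-1}$ with no subspace restriction needed.
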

\begin{proof}
There is a well-known construction \cite{Nielsen2010} where, for any integer $w\ge2$, $2w-3$ Toffoli gates and $2w-1$ qubits ($w-2$ of which are ancillas) suffice to make $\text{C}^wX$, i.e.~$X$ with $w$ control qubits.
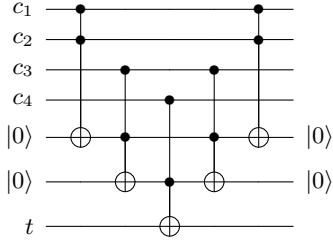
\begin{figure}
\begin{equation*}
\Qcircuit @C=1em @R=1em {
& \lstick{c_1}     & \ctrl{4}  & \qw      & \qw      & \qw      & \ctrl{4} & \qw                  \\ 
& \lstick{c_2}     & \ctrl{3}  & \qw      & \qw      & \qw      & \ctrl{3} & \qw                  \\ 
& \lstick{c_3}     & \qw       & \ctrl{3} & \qw      & \ctrl{3} & \qw      & \qw                  \\ 
& \lstick{c_4}     & \qw       & \qw      & \ctrl{3} & \qw      & \qw      & \qw                  \\ 
& \lstick{\ket{0}} & \targ     & \ctrl{1} & \qw      & \ctrl{1} & \targ    & \rstick{\ket{0}} \qw \\
& \lstick{\ket{0}} & \qw       & \targ    & \ctrl{1} & \targ    & \qw      & \rstick{\ket{0}} \qw \\
& \lstick{t}       & \qw       & \qw      & \targ    & \qw      & \qw      & \qw                  \\
} 
\end{equation*}
\caption{\label{fig:ClX} Making a $\text{C}^4X$ gate with controls $c_1,c_2,c_3,c_4$ and target $t$ from five Toffolis and two ancillas.}
\end{figure}
See Fig.~\ref{fig:ClX} for an example with $w=4$. Since $\text{C}^wX\in C_{w+1}$, we see that having transversal Toffoli would imply transversal gates in every level of the Clifford hierarchy. But this is ruled out by the finite bound on level argued for above.
\end{proof}
\noindent The same conclusion was shown for most quantum codes in \cite{Newman2017} by reduction to bounds in homomorphic encryption. Our proof technique can be applied to any other set of gates that is, like Toffoli, capable of bootstrapping itself indefinitely up the hierarchy.

\section{Transversal gates with permutations}\label{app:permuting_transversal}
In this section, we extend Theorem~\ref{thm:multiblock_code_bound_main} to the case of permuting transversal operators $K_0$, which are those that can be written as $K_0=UP$ for transversal $U$ and permutation $P$ of the partitions $\mathcal{H}_i$ separately for each codeblock.

As in the proof of Theorem~\ref{thm:finite_code_bound}, take an arbitrary sequence of cosets $G_1,G_2,\dots\in\mathcal{L}$ and define $K_1=[K_0,g_1]=UPg_1PU^\dag g_1^\dag$ and $K_j=[K_{j-1},g_j]$ for some choices of $g_j\in G_j$. The key thing to notice is that the recursive reduction of support of the $K_j$ is modified only at $K_1$. Take $g_1\in G_1\in\mathcal{L}$ to have minimal support $|\supp{g_1}|=d(G_1)$, so that
\begin{equation}\label{eq:doubled_support}
|\supp{K_1}|\le2d(G_1),
\end{equation}
simply because $Pg_1P$ may have disjoint support from $g_1^\dag$. Bounding the supports of $K_j$ can then be done exactly as in the proof of Theorem~\ref{thm:finite_code_bound}. More generally, the argument for Theorem~\ref{thm:multiblock_code_bound_main} found in Appendix~\ref{app:multiple_codeblock} can incorporate the observation Eq.~\eqref{eq:doubled_support} to show
\begin{thm}
Consider a stabilizer code with quantities $d_{\downarrow},d_{\uparrow},\disj$. Let $r'=\min(r,N!m^{n-k})$. If
\begin{equation}\label{eq:multiblock_code_bound}
2r'd_{\uparrow}\left(1-(1-1/\disj)^{r'}\right)^{M-1}<d_{\downarrow},
\end{equation}
then all permuting transversal gates on $r$ codeblocks are in $\overline C_M$. When $r=1$,
\begin{equation}
2d_{\uparrow}<d_{\downarrow}\disj^{M-1}
\end{equation}
implies the same for one codeblock.
\end{thm}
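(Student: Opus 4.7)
The plan is to follow the proof of Theorem~\ref{thm:multiblock_code_bound_main} essentially verbatim, making a single modification at the very first step of the recursive support-reduction. Fix arbitrary logical cosets $G_1,\dots,G_M$ in the effective code and a minimal-support representative $g_1\in G_1$ with $|\supp{g_1}|=d(G_1)$; then define $K_j=[K_{j-1},g_j]$ for $j\ge 1$, choosing each subsequent $g_j$ via Lemma~\ref{lem:scrubbing} (or Lemma~\ref{lem:multiblock_scrubbing} for $r>1$) applied with $H=\supp{K_{j-1}}$.

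The key observation is that $K_0=UP$ is not transversal, but $K_1$ \emph{is}. Since $U$ is transversal and $P$ merely permutes partitions within each codeblock,
\begin{equation*}
\supp{K_1}=\supp{U(Pg_1P^{-1})U^\dag g_1^\dag}\subseteq P(\supp{g_1})\cup\supp{g_1},
\end{equation*}
so $|\supp{K_1}|\le 2d(G_1)\le 2\dmax$ for $r=1$, and $|\supp{K_1}|\le 2 d_{\uparrow,\text{eff}}\le 2r'\dmax$ for multiple codeblocks after invoking the finiteness argument of Appendix~\ref{app:multiple_codeblock}. Crucially, $K_1$ is a product of four transversal operators ($U$, $Pg_1P^{-1}$, $U^\dag$, $g_1^\dag$) and is itself transversal. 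From $K_1$ onward the argument of Theorem~\ref{thm:finite_code_bound} (respectively Theorem~\ref{thm:multiblock_code_bound_main}) applies unchanged: each further commutator shrinks the support by a multiplicative factor of $\disj$ (respectively $1-(1-1/\disj)^{r'}$). Combined with the hypothesis on $M$, this gives $|\supp{K_M}|<\dmin$, forcing $K_M$ to be a trivial logical operator and placing $K_0\in\overline C_M$ by the recursive definition of the Clifford hierarchy.

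The main subtlety that I expect will require care is checking that the finiteness argument from Appendix~\ref{app:multiple_codeblock}, which replaces $r$ with $r'=\min(r,N!m^{n-k})$, still applies in the presence of $P$. Because $P$ acts as an independent permutation on each codeblock's $N$ partitions and is drawn from a finite group, absorbing $P$ into the effective permutations $\sigma_b$ is a bounded relabeling that leaves the count of distinct $(G^{(b)},\sigma_b)$ pairs unchanged. The factor of $2$ appears as a prefactor rather than raised to the $(M-1)$-th power because the doubling only occurs at the single step that creates $K_1$; every subsequent commutator involves only transversal operators, whose support is bounded by intersection rather than union.
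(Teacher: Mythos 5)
Your proposal is correct and follows essentially the same route as the paper: the only modification to the transversal argument is at the first commutator, where $K_1=U(Pg_1P^{-1})U^\dag g_1^\dag$ has support contained in $P(\supp{g_1})\cup\supp{g_1}$, giving the factor of $2$, after which $K_1$ is transversal and the recursion of Theorem~\ref{thm:finite_code_bound} (or its multi-codeblock version) proceeds unchanged. Your additional check that the $r'=\min(r,N!m^{n-k})$ finiteness argument survives the presence of $P$ is a detail the paper leaves implicit, and your justification of it is sound.
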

\noindent Notice that for single codeblocks $r=1$ the permuting transversal operators $K_0$ \emph{do} form a group, and thus this theorem has a corollary that the group of permuting transversal operators on a single codeblock is finite and non-universal.

\bibliography{references.bib}

\end{document}